\newcommand{\N}[0]{\mathbb{N}}
\newcommand{\R}[0]{\mathbb{R}}
\newcommand{\C}[0]{\mathbb{C}}
\newcommand{\Ham}[0]{\operatorname{Ham}}
\newcommand{\Span}[1]{\operatorname{Sp}\!\left(#1\right)}
\renewcommand{\Trace}[1]{\operatorname{tr}\left(#1\right)}
\newcommand{\noinitial}[1]{}
\newcommand{\added}[1]{#1}
\newcommand{\replaced}[2]{#1}
\newcommand{\deleted}[1]{}
\definecolor{SO}{HTML}{e66101}
\definecolor{SOR}{HTML}{fdb863}
\definecolor{ST}{HTML}{5e3c99}
\definecolor{STR}{HTML}{b2abd2}
\pgfplotsset{compat = newest}
\tikzset{arrowfill/.style={top color=blue!20, bottom color=blue, general shadow={fill=black, shadow yshift=-0.8ex, path fading=arrowfading}}}
\tikzset{arrowstyle/.style={draw=blue,arrowfill, single arrow,minimum height=#1, single arrow,
single arrow head extend=.4cm,}}
\newtheorem{theorem}{Theorem}
\newtheorem{remark}{Remark}
\newtheorem{corollary}{Corollary}
\newtheorem{definition}{Definition}
\title{Constraint preserving mixers for QAOA}
\date{\today}
\author[$\dagger$]{Franz G. Fuchs}
\author[$\dagger$]{Kjetil Olsen Lye}
\author[$\dagger$]{Halvor Møll Nilsen}
\author[$\dagger$]{Alexander J. Stasik}
\author[$\dagger$]{Giorgio Sartor}
\affil[$\dagger$]{SINTEF AS, Department of Mathematics and Cybernetics, Oslo, Norway}
\begin{document}
\maketitle

\abstract{
    The quantum approximate optimization algorithm/quantum alternating operator ansatz (QAOA) is a heuristic 
    to find approximate solutions of combinatorial optimization problems.
    Most literature is limited to quadratic problems without constraints.
    However, many practically relevant optimization problems do have (hard) constraints that need to be fulfilled.
    In this article\added{,} we present a framework for constructing mixing operators that restrict the evolution to a subspace of the full Hilbert space given by these constraints;
    We generalize the ``XY"-mixer designed to preserve the subspace of ``one-hot" states to the general case of subspaces given by a number of computational basis states.
    We expose the underlying mathematical structure which 
    reveals more of how mixers work and how one can minimize their cost in terms of number of \replaced{CX}{CNOT} gates, particularly when Trotterization is taken into account.
    Our analysis also leads to valid Trotterizations for \added{an }``XY"-mixer with fewer \replaced{CX}{CNOT} gates \replaced{than}{as} is known to date.
    In view of practical implementations\added{,} we also describe algorithms for efficient decomposition into basis gates.
    Several examples of more general cases are presented and analyzed.
}

\section{Introduction}
The quantum approximate optimization algorithm (QAOA)~\cite{farhi2014quantum}, and its generalization, the quantum alternating operator ansatz (also abbreviated as QAOA)~\cite{hadfield2019quantum} is a meta-heuristic for solving combinatorial optimization problems that can utilize gate based quantum computers and possibly outperform purely classical heuristic algorithms.
Typical examples that can be tackled are 
quadratic (binary) optimization problems of the form
\begin{equation}
    x^* = \underset{x\in \{0,1\}^n,\ g(x) = 0}{\text{arg min}} \ f(x), \quad f(x) = x^T Q_f x + c_f, \quad  g(x) = x^T Q_g x + c_g
    \label{eq:QCBO}
\end{equation}
where $Q_f,Q_g\in\R^{n\times n}$ are symmetric $n\times n$ matrices.
For binary variables $x\in\{0,1\}$, any linear part can be absorbed into the diagonal of $Q_f$ and $Q_g$.
In this article we focus on the case where the constraint is given by a feasible subspace as defined in the following\replaced{:}{.}
\begin{definition}[Constraints given by indexed computational basis states]
\label{def:B}
Let $\mathcal{H} = (\C^2)^{\otimes n}$
the Hilbert space for $n$ qubits, which is spanned by all computational basis states $\ket*{z_j}$, i.e., 
$\mathcal{H}=\text{\normalfont span}\{\ket*{z_j}, 1\leq j\leq 2^n, z_j\in\{0,1\}^n\}$.
Let
\begin{equation}
    B=\left\{\ket*{z_j}, \ \ j\in J, \ \ z_j\in\{0,1\}^n \right\},
\end{equation}
the subset of all computational basis states defined by an index set J.
This corresponds to
\begin{equation}
    g(x)=\prod_{j\in J} \replaced{\sum_{i=1}^n}{\sum_i} \left(x_i-(z_j)_i\right)^2,
\end{equation}
which is a quadratic constraint.
\end{definition}
There is a well-established connection of quadratic (binary) optimization problems to Ising models, see e.g.~\cite{lucas2014ising}, that allows \added{one} to directly translate these problems to the QAOA.
The general form of QAOA is given by
\begin{equation}
    \ket{\gamma,\beta} = U_M(\beta_p) U_P(\gamma_p)\cdots U_M(\beta_1) U_P(\gamma_1) \ket{\phi_0},
\end{equation}
where one alternates the application of phase separating and mixing operator $p$ times.
Here, $U_P(\gamma)$ is a phase separating operator that depends on the objective function $f$.
As defined in \cite{hadfield2019quantum} the \ul{requirements for the mixing operator} $U_M(\beta)$ are as follows
\begin{itemize}
    \item $U_M$ does not commute with $U_P$, i.e., $[U_M(\beta),U_P(\gamma)]\neq 0$, for almost all $\gamma,\beta\in\R$,
    \item $U_M$ \textit{preserves the feasible subspace} as given in Definition~\ref{def:B}, i.e., $\Span{B}$ is an invariant subspace of $U_M$,
    \begin{equation}
    \label{eq:mixer_preserve}
        U_M(\beta)\ket{v} \in \Span{B}, \quad \forall \ket{v} \in \Span{B}, \forall \beta\in\R,
    \end{equation}
    \item $U_M$ provides transitions between all pairs of feasible states, i.e., for each pair $x,y$
    $\exists \beta^*\in\R$ 
    and $\exists r \in \N\cap\{0\}$
    , such that
    \begin{equation}
    \label{eq:mixer_transition}
        |\bra{x}
        \underbrace{U_M(\beta^*)\cdots U_M(\beta^*)}_{r\text{ times}}
        \ket{y}|
            >0, \forall \text{ comp. basis states } \ket{x}, \ket{y} \in B.
            \end{equation}
\end{itemize}
If both $U_M$ and $U_P$ correspond to \replaced{the time}{th etime} evolution under som\added{e} Hamiltonians $H_M, H_P$,
\added{i.e., $U_M=e^{-i\beta H_M}$ and $U_P=e^{-i\gamma H_P}$,}
the approach can be termed ``Hamiltonian-based QAOA" (H-QAOA). If the Hamiltonians $H_M, H_P$ are the sum of (polynomially many) local terms it represents a sub-class termed ``local Hamiltonian-based QAOA" (LH-QAOA).

\added{
In practice, it is not possible to implement $U_M$ or $U_P$ directly.
It is necessary to decompose the evolution into smaller pieces, which means that instead of applying $e^{-it(H_1+H_2)}$ one can only apply $e^{-i t H_1}$ and $e^{-i t H_2}$.
This process is typically referred to as \emph{``Trotterization"}.
%
%
As an example, the simplest Suzuki-Trotter decomposition, or the exponential product formula~\cite{hatano2005finding,trotter1959product} is given by
\begin{equation}
    e^{x(H_1+H_2)} = e^{x H_1} e^{x H_2} + \mathcal{O}(x^2)
\end{equation}
where $x$ is a parameter and $H_1, H_2$ are two operators with some commutation relation $[H_1,H_2]\neq 0$.
Higher order formulas can be found for instance in~\cite{hatano2005finding}.
}

\added{
Practical algorithms need to be defined using a few operators from a universal gate set, e.g., $\{U_3,CX\}$, where
\begin{equation}
U_3(\theta,\phi,\lambda) =
\begin{pmatrix}
\cos(\theta/2) & -e^{i\lambda} \sin(\theta/2) \\
e^{i\phi} \sin(\theta/2) & e^{i(\phi+\lambda)}\cos(\theta/2) \\
\end{pmatrix}, \quad CX =
\begin{pmatrix}
1 & 0 & 0 & 0 \\
0 & 1 & 0 & 0 \\
0 & 0 & 0 & 1 \\
0 & 0 & 1 & 0 \\
\end{pmatrix}.
\end{equation}
A good (and simple) indicator for the complexity of a quantum algorithm is given by the number of required $CX$ gates.
Overall, the most efficient algorithm is the one that provides the best accuracy in a given time~\cite{kronsjo1987algorithms}.
}

\begin{remark}[Repeated mixers]
If $U_M$ is the exponential of a Hermitian matrix, the parameter $r$ \added{in Equation~\eqref{eq:mixer_transition}} does not matter as it can be absorbed as a re-scaling of $\beta$.
However, if $U_M$ is Trotterized\deleted{, i.e., the product of exponentials of non-commuting Hermitian matrices} this can lead to missing transitions.
In this case $r>1$ can again provide these transitions. It is therefore suggested \deleted{as useful} in~\cite{hadfield2019quantum} to repeat mixers within one mixing step.
\replaced{For}{Out of} this reason, we will \replaced{consider}{treat} the cost of Trotterized mixers including the necessary repetitions to provide transitions for all feasible states.
\end{remark}


\section{Related work}
\added{
The QAOA was introduced by \cite{farhi2014quantum} where it was applied to the Max-Cut problem.
The authors in~\cite{guerreschi2019qaoa} compared the QAOA to the classical AKMAXSAT solver extrapolate from small instances to large instances and estimate that a quantum speed-up can be obtained with (several) hundreds of qubits.
%
A general overview of variational quantum algorithms, including challenges and how to overcome them, is provided in \cite{cerezo2021variational,Moll2018}.
Key challenges are that it is in general hard to find good parameters.
It has been shown that the training landscapes are in general NP-hard~\cite{bittel_training_2021}.
Another obstacle are so-called barren plateaus, i.e. regions in the training landscape where the loss function is
effectively constant~\cite{Moll2018}.
This phenomenon can be caused by random initializations, noise, and over-expressablity of the ansatz\cite{wang2021noise,zhang_fundamental_2022}
%
}

\added{
Since its inception, several extensions/variants of the QAOA have been proposed. 
\textit{ADAPT-QAOA}~\cite{zhu2020adaptive} is an iterative, problem-tailored version of QAOA that can adapt to specific hardware constraints.
A non-local version, referred to as \textit{R-QAOA}~\cite{bravyi2019obstacles} recursively removes variables from the Hamiltonian until the remaining instance is small enough to be solved classically.
Numerical evidence shows that this procedure significantly outperforms standard QAOA for frustrated Ising models on random 3-regular graphs for the Max-Cut problem.
\textit{WS-QAOA}~\cite{egger2020warm} takes into account solutions of classical algorithms to a warm-starting QAOA.
Numerical evidence shows an advantage at low depth, in the form of a systematic increase in the size of the obtained cut for fully connected graphs with random weights.
}

There are two principal ways to take constraints into account when solving Equation~\eqref{eq:QCBO} with the QAOA.
The standard, simple approach is to penalize unsatisfied constraints in the objective function with the help of a so called Lagrange multiplier $\lambda$, leading to
\begin{equation}
    x^* = \underset{x\in \{0,1\}^n}{\text{arg min}} \ \left( f(x) + \lambda g(x) \right).
\end{equation}
This approach is popular, since it is straightforward to define a phase separating Hamiltonian for $f(x) + \lambda g(x)$.
Some applications include the tail-assignment problem~\cite{PhysRevApplied.14.034009}, the Max-k-cut problem~\cite{fuchs2021efficient}, graph coloring problems, and the traveling sales person problem~\cite{hadfield2017quantum}.
A downside of this \deleted{is} approach is that infeasible solutions are also possible outcomes, especially for approximate solvers like QAOA. This also makes the search space much bigger and the entire approach less efficient.
In addition, the quality of the results turns out to be very sensitive to the chosen value of the hyper parameter $\lambda$. On one hand, $\lambda$ should be chosen large enough such that the lowest eigenstates of $H_P$ correspond to feasible solution\added{s}. On the other hand, too large values of $\lambda$ mean that the resulting optimization landscape in the $\gamma$ has very high frequencies, which makes the problem hard to solve in practice. In general, it can be very challenging to find (the problem dependent) value for $\lambda$ that \replaced{best}{better} balances the trade off between optimality and feasibility in the objective function~\cite{Wang2020}.

For QAOA, a second approach is to define mixers that have zero probability to go from a feasible state to an infeasible one, making the hyper parameter $\lambda$ of the previous approach unncessary.
However, it is generally more challenging to devise mixers that take into account constraints.
The most prominent example in \added{the} literature is the $XY$-mixer~\cite{hadfield2017quantum,hadfield2019quantum,Wang2020} which constrains evolution to states with non-zero overlap with ``one-hot" states. One-hot states are computational basis states with exactly one entry equal to one. For instance $\ket{0001}$ and $\ket{010000}$ are one-hot states, while $\ket{00}$ and $\ket{110}$ are not.
The name $XY$ mixer comes from the related $XY$-Hamiltonian~\cite{lieb1961two}.
The mixers derived in \added{the} literature follow the intuition of physicists to use ``hopping" terms.
A performance analysis of the XY-mixer applied to the maximum k-vertex cover shows a heavy dependence on the initial states as well as the chosen Trotterization~\cite{cook2020quantum}.

QAOA can be viewed as a discretized version of quantum annealing. In quantum annealing enforcing constraints via penalty terms is particularly ``harmful" since they often require all-to-all connectivity of the qubits~\cite{hen2016driver}.
The authors in~\cite{hen2016quantum} therefore introduce driver Hamiltonians that commute with the constraints of the problem.
This bears similarities with and actually inspired the approaches in~\cite{hadfield2017quantum,hadfield2019quantum}.




The main contributions of this article are:
\begin{itemize}
    \setlength\itemsep{0\baselineskip}
    \item A general framework to construct mixers restricted to a set of computational basis states, see Section~\ref{sec:condmixers}.
    \item An analysis of the underlying mathematical structure, which is largely independent of the actual states, see Section~\ref{sec:transmatrix}.
    \item Efficient algorithms for decomposition into basis gates, see Section~\ref{sec:decomp} and~\ref{sec:Trotterizations}.
    \item Valid Trotterizations, which is not completely understood in \added{the} literature, see Section~\ref{sec:Trotterizations}.
    \item \added{We prove that it is always possible to realize a valid Trotterization, see Theorem~\ref{theorem:P_Tij_commutes}.}
    \item Improved efficiency of Trotterized mixers for ``one-hot" states in Section~\ref{sec:XYmixer}.
    \item Discussion of the general case, exemplified in Section~\ref{sec:generalcases}.
\end{itemize}
We start by describing the general framework.

\section{Construction of constraint preserving mixers}

In the following we will derive a general framework for mixers that are restricted to a subspace, given by certain basis states.
For example, one may want to construct a mixer for five qubits that is restricted to the subspace $\Span{\ket{01001}, \ket{11001}, \ket{11110}}$ of \replaced{$\C^{2^5}$}{$\C^{2^n}$}, where $\Span{B}$ denotes the linear span of $B$.
In this section we will describe the conditions for a Hamiltonian-based QAOA mixer to preserve the feasible subspace, and for providing transitions between all pairs of feasible states.
We also provide efficient algorithms to decompose these mixers into basis gates.


\subsection{Conditions on the mixer Hamiltonian}\label{sec:condmixers}

\begin{theorem}[Mixer Hamiltonians for subspaces]\label{theorem:MixerHamiltonian}
Given a feasible subspace $B$ as in Definition~\ref{def:B}
and a real-valued \emph{transition} matrix $T\in\R^{|J|\times|J|}$.
Then, for the mixer constructed via
\begin{equation}
    U_M(\beta) = e^{-i\beta H_M}, \quad \text{ where } H_M = \sum_{j,k\in J} (T)_{j,k}\ket*{x_j}\bra*{x_k},
    \label{eq:Hmdefinition}
\end{equation}
the following statements hold.
\begin{itemize}
    \item If $T$ is symmetric, the mixer is well defined and preserves the feasible subspace, i.e. condition~\eqref{eq:mixer_preserve} is fulfilled.
\item
If $T$ is symmetric and for all $1\leq j,k \leq |J|$ 
there exist\added{s} an $r\in\N\cup\{0\}$ (possibly depending on the pair) such that
\begin{equation}
    (T^r)_{j,k} \neq 0,
    \label{eq:Hmfeasiblecond}
\end{equation}
then $U_M$ provides transitions between all pairs of feasible states, i.e. condition~\eqref{eq:mixer_transition} is fulfilled.
\end{itemize}
\end{theorem}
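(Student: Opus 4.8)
The plan is to treat the two bullet points in turn, both of which reduce to simple linear-algebra facts about the operator $H_M$ once we notice that $B = \{\ket{x_j}\}_{j\in J}$ is an orthonormal set (distinct computational basis states), so that $H_M$ acts on $\Span{B}$ exactly as the matrix $T$ acts on $\R^{|J|}$ (or $\C^{|J|}$) in the basis $\{\ket{x_j}\}$, and annihilates the orthogonal complement $\Span{B}^\perp$. This identification is the backbone of everything.

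\emph{Well-definedness and preservation of the feasible subspace.} First I would observe that if $T$ is symmetric (and real), then $H_M = \sum_{j,k} T_{j,k}\ket{x_j}\bra{x_k}$ is Hermitian: $H_M^\dagger = \sum_{j,k} \overline{T_{j,k}}\ket{x_k}\bra{x_j} = \sum_{j,k} T_{k,j}\ket{x_k}\bra{x_j} = H_M$. Hence $U_M(\beta)=e^{-i\beta H_M}$ is unitary and well defined for every $\beta\in\R$; this is what ``well defined'' means here (a genuine mixer given by Hamiltonian evolution). For preservation, note $H_M\ket{x_k} = \sum_j T_{j,k}\ket{x_j}\in\Span{B}$ for each $k\in J$, so $H_M(\Span{B})\subseteq\Span{B}$; since $\Span{B}$ is finite-dimensional and $H_M$-invariant, it is invariant under every power $H_M^m$ and therefore under the norm-convergent series $e^{-i\beta H_M}=\sum_m \frac{(-i\beta)^m}{m!}H_M^m$. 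Thus $U_M(\beta)\ket{v}\in\Span{B}$ for all $\ket{v}\in\Span{B}$, which is exactly condition~\eqref{eq:mixer_preserve}.

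\emph{Transitions between all feasible states.} Here I would fix a pair of computational basis states $\ket{x}=\ket{x_j}$, $\ket{x'}=\ket{x_k}$ in $B$ and show that $\bra{x_j} U_M(\beta^*)^r \ket{x_k}\neq 0$ for a suitable choice of $\beta^*$ and $r$. Since $H_M$ restricted to $\Span{B}$ is represented by $T$, we have $\bra{x_j}H_M^m\ket{x_k} = (T^m)_{j,k}$ for every $m\ge 0$, and therefore
\begin{equation}
  \bra{x_j} U_M(\beta)\ket{x_k} = \sum_{m=0}^\infty \frac{(-i\beta)^m}{m!}\,(T^m)_{j,k} =: \varphi_{j,k}(\beta),
\end{equation}
a real-analytic (indeed entire) function of $\beta$ (up to the powers of $-i$, its real and imaginary parts are entire). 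The hypothesis says there is some $r$ with $(T^r)_{j,k}\neq 0$; I would take the smallest such $r=r_{j,k}$. Then $\varphi_{j,k}$ is not identically zero — its Taylor coefficient at order $r_{j,k}$ is nonzero — so by the identity theorem for analytic functions it has only isolated zeros, and we may pick $\beta^*\in\R$ with $\varphi_{j,k}(\beta^*)\neq 0$. This gives $|\bra{x_j}U_M(\beta^*)\ket{x_k}|>0$, i.e. condition~\eqref{eq:mixer_transition} holds with $r=1$ (and a fortiori one may absorb repetitions). Strictly, the definition allows $\beta^*$ to depend on the pair; if instead one wants a single $\beta^*$ working for all pairs simultaneously, one notes there are finitely many pairs and each $\varphi_{j,k}$ vanishes on a set of measure zero, so the union of bad sets is still measure zero and a common good $\beta^*$ exists.

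\emph{Main obstacle.} The only genuinely delicate point is the transition statement: passing from the algebraic condition $(T^r)_{j,k}\neq 0$ on a \emph{power of the matrix} to a statement about a \emph{single} value of the continuous parameter $\beta^*$ in the matrix exponential. The risk is accidental cancellation in the series $\varphi_{j,k}(\beta)$ making the amplitude vanish for all $\beta$; the analyticity argument (nonzero Taylor coefficient $\Rightarrow$ isolated zeros) is exactly what rules this out, and it is worth stating carefully since it is the crux of why repeated mixers are unnecessary in the non-Trotterized Hamiltonian case (cf. the Remark on repeated mixers). Everything else is bookkeeping with the orthonormality of $B$ and the invariance of a finite-dimensional subspace under a power series in $H_M$.
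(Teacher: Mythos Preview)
Your proof is correct and follows essentially the same route as the paper: Hermiticity of $H_M$ from symmetry of $T$, invariance of $\Span{B}$ under $H_M$ (and hence under the exponential series), the key identity $\bra{x_j}H_M^m\ket{x_k}=(T^m)_{j,k}$, and the analyticity argument that a nonzero Taylor coefficient forces $\varphi_{j,k}$ to be nonvanishing for some $\beta^*$. The paper packages the identification of $H_M|_{\Span{B}}$ with $T$ via the matrix $E=[\ket{x_j}]_{j\in J}$ and the formula $H_M=ETE^T$, but this is cosmetic; your additional remark that a \emph{common} $\beta^*$ exists for all pairs (finite union of measure-zero sets) is a nice bonus the paper does not spell out.
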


\begin{figure}
    \centering
    \begin{subfigure}[t]{0.48\textwidth}
    \centering
        \begin{tikzpicture}[scale=.225]
            \node[] at (-5, 4)   (a) {$\C^{2^n}$};
            \node[] at ( 5, 4)   (d) {$\C^{2^n}$};
            \node[] at (-5,-4)   (b) {$\C^{|J|}$};
            \node[] at ( 5,-4)   (c) {$\C^{|J|}$};
            
            \draw[thick,-{Latex[length=4mm]}] (a) -- node[above]{$H_{M,B}$} (d);
            \draw[thick,-{Latex[length=4mm]}] (a) -- node[left]{$E^T$} (b);
            \draw[thick,-{Latex[length=4mm]}] (b) -- node[below]{$T$} (c);
            \draw[thick,-{Latex[length=4mm]}] (c) -- node[right]{$E$} (d);
        \end{tikzpicture}
        \caption{The action of the Hamiltonian $H_{M,B}$ can be understood as the transition matrix $T$ acting upon the feasible basis states of $B =\left\{\ket*{x_j}, \ \ j\in J \right\}$.
    \deleted{Here, $E = \{ \ket*{x_j} \}_{j\in J}$.}
    }
    \end{subfigure}
    \hfill
    \begin{subfigure}[t]{0.48\textwidth}
        \centering
        \begin{tikzpicture}[scale=.3]

        \begin{scope}[fill opacity=0.5,text opacity=1]
        
        \def\shifty{0}
        \def\shift{12}
        
        \draw[draw = black, fill=gray] (0,0) circle (5);
        \draw[fill=green, draw = black,name path=circle 1] (0,2) circle (2);

        \node[] at (0,6) (A) {$\C^{2^n}$};
        \node[] at (0,2+\shifty) (B2) {$\Span{B}$};
        
        \draw[] (0,0) node[circle,inner sep=1pt,fill,fill opacity=1,label=below:$0$](O){};
        
        \draw[draw = black, fill=gray] (0+\shift,0) circle (5);
        \draw[fill=green, draw = black,name path=circle 1] (0+\shift,2+\shifty) circle (2);
        
        \node[] at (0+\shift,6) (A2) {$\C^{2^n}$};
        \node[] at (0+\shift,2+\shifty) (B2) {$\Span{B}$};
        \draw[] (0+\shift,0) node[circle,inner sep=1pt,fill,fill opacity=1,label=below:$0$](O){};
        
        \draw[gray,very thick,-{Latex[length=4mm]},opacity=.75] (2,2) -- (-2+\shift,2+\shifty);
        \draw[green,very thick,-{Latex[length=4mm]},opacity=.75] (2,2) -- (-2+\shift,2+\shifty);
        
        \draw[gray,very thick,-{Latex[length=4mm]},opacity=1] (2,0) -- (0+\shift,0+\shifty);
        
                
        \node[] at (0.5*\shift,3+\shifty) (H) {$H_{M,B}$};

        \end{scope}
        \end{tikzpicture}
        \caption{The Hamiltonian $H_{M,B}:\C^{2^n} \rightarrow \C^{2^n}$ maps everything outside $\Span{B}$ to $0$ and $\Span{B}$ onto itself.
        }
    \end{subfigure}

    \caption{
    Illustration of properties of Hamiltonians constructed with Theorem~\ref{theorem:MixerHamiltonian}.
    }
    \label{fig:H_M_comm}
\end{figure}
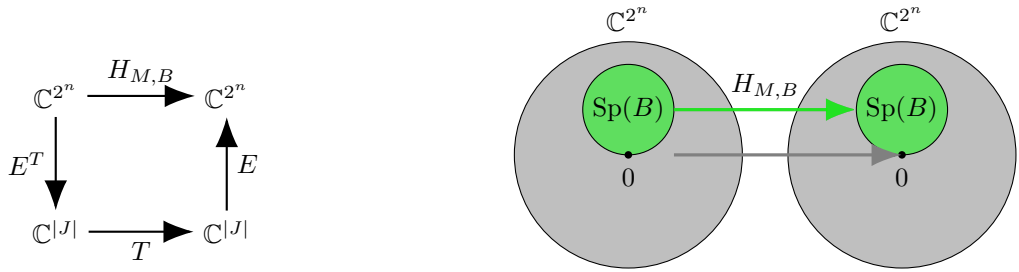

\begin{proof}

\noindent \ul{Well definedness.}

\noindent
Almost trivially $H_M$ is Hermitian if $T$ is symmetric,
\begin{equation}
    H_M^\dagger = \sum_{j,k\in J} (T)_{j,k}\ket*{x_k}\bra*{x_j} = \sum_{j,k\in J} (T)_{k,j}\ket*{x_k}\bra*{x_j} = H_M.
\end{equation}
%
%
Since $H_M$ is a Hermitian (and therefore normal) matrix 
there exists a diagonal matrix $D$, with the entries of the diagonal as the (real valued) eigenvalues of $H_M$, and a matrix $U$, with columns given by the corresponding orthonormal eigenvectors.
%
The mixer is therefore well defined through the convergent series
\begin{equation}\label{eq:eH}
    e^{-i t H_M} = 
    \sum_{m=0}^\infty \frac{(-it)^m H_M^m}{m!} = U e^{-i t D} U^\dagger.
\end{equation}

\noindent\ul{Reformulations.}

\noindent
We can rewrite $H_M$ in the following way
\begin{equation}
        H_M: \ \ \ket{y} \mapsto \sum_{j,k\in J} (T)_{j,k}\bra*{x_j}\ket{y} \ket*{x_k} = E T E^T \ket{y}, \ \ \ket{y}\in\C^{2^n},
    \label{eq:H_ETET}
\end{equation}
where the columns of the matrix $E \in \R^{2^n\times |J|}$ consist\deleted{s} of the feasible computational basis states, i.e., \replaced{$E = [x_j ]_{j\in J}$}{$E = \{ \ket{x_j} \}_{j\in J}$}, see Figure~\ref{fig:H_M_comm} for an illustration.
Using that $E^TE=I \in\R^{|J| \times |J|}$ is the identity matrix, we have that
\begin{equation}\label{eq:Hpow}
    H_M^m = ET^m E^T = \sum_{j,k\in J} (T^m)_{j,k}\ket*{x_j}\bra{x_k}, \ m \in \N,
\end{equation}
and Equation~\eqref{eq:eH} can be written as
\begin{equation}\label{eq:eitH}
    e^{-i t H_M} = 
    E\left(\sum_{m=0}^\infty \frac{(-it)^m T^m}{m!}\right)E^T.
\end{equation}

\noindent\ul{Preservation of the feasible subspace.}

\noindent
Let $\ket{v}\in \Span{B}$. Using Equation~\eqref{eq:Hpow} we know that
$$H_M^m\ket{v} = \sum_{j,k\in J} (T^m)_{j,k}\ket*{x_j}\bra{x_k}\ket{v} = \sum_{j\in J} c_j \ket{x_j} \in \Span{B},$$ with coefficients $c_j\in\C$.
Therefore, also $e^{-i t H_M} \ket{v} \in \Span{B}, t \in \R$, since it is a sum of these terms.

\noindent\ul{Transition between all pairs of feasible states.}

\noindent
For any pair of feasible computational basis states $\ket{x_{j^*}}, \ket*{x_{k^*}} \in B$ we have that
\begin{equation}\label{eq:overlap}
\begin{split}
f(t)&=
    \bra{x_{j^*}} U_M(t) \ket*{x_{k^*}}
    =
    \bra{x_{j^*}} \sum_{m=0}^\infty \left(\frac{(-it)^m}{m!} \sum_{j,k\in I} (T^m)_{j,k}\ket*{x_j}\bra{x_k}\right) \ket*{x_{k^*}}\\
    &=
    \sum_{m=0}^\infty \frac{(-it)^m}{m!} (T^m)_{j^*,k^*}
\end{split}
\end{equation}
It is enough to show that $f(t)$ is not the zero function.
Since $f(t):\R\rightarrow\C$ is an analytic function it has a unique extension to $\C$.
Assume that $f$ is \deleted{were} indeed the zero function on $\R$, then the extension to $\C$ would also be the zero function and all coefficients of its Taylor series \replaced{would be}{were} zero.
However, we assumed the existence of an $r\in\N\cup\{0\}$ such that $|(T^r)_{j^*,k^*}|>0$, and hence there exists a non-zero coefficient, which is a contradiction to $f$ being the zero function.

\end{proof}


A natural question is how the statements in Theorem~\ref{theorem:MixerHamiltonian} depend on the particular ordering of the elements of $B$.

\begin{corollary}[Independence of the ordering of $B$.]
Statements in Theorem~\ref{theorem:MixerHamiltonian} that hold for a particular ordering of computational basis states for a given $B$, hold also for any permutation $\pi:\{1,\cdots,|J|\} \rightarrow \{1,\cdots,|J|\}$, i.e., they are independent of the ordering of elements.
For each ordering, the transition matrix $T$ changes according to $T_\pi = P_\pi^T T P_\pi$, where $P_\pi$ is the permutation matrix associated with $\pi$.
\label{corollary:independenceofordering}
\end{corollary}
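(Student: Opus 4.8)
The plan is to track how the data defining the mixer changes under a relabeling of the feasible basis states, and to observe that the three conditions in Theorem~\ref{theorem:MixerHamiltonian} are stated in permutation-equivariant form. Concretely, fix an ordering $(x_1,\dots,x_{|J|})$ of $B$ and a permutation $\pi$, giving the reordered list $(x_{\pi(1)},\dots,x_{\pi(|J|)})$. Let $P_\pi$ be the associated permutation matrix, so that $(P_\pi)_{j,k}=\delta_{j,\pi(k)}$; then the matrix $E$ whose columns are the feasible states transforms as $E_\pi = E P_\pi$, and in particular $E_\pi^T E_\pi = P_\pi^T E^T E P_\pi = P_\pi^T P_\pi = I$ is still the identity, so the reformulations in Equations~\eqref{eq:H_ETET}–\eqref{eq:eitH} remain valid for the new ordering.

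First I would compute the mixer Hamiltonian built from the new ordering and a new transition matrix $\tilde T$ using Equation~\eqref{eq:Hmdefinition}, namely $H_M^{(\pi)} = E_\pi \tilde T E_\pi^T = E P_\pi \tilde T P_\pi^T E^T$. Demanding that this equals the original operator $H_M = E T E^T$ (so that the \emph{same} mixer is being described, merely re-indexed) and using $E^T E = I$ to invert, one gets $P_\pi \tilde T P_\pi^T = T$, i.e. $\tilde T = P_\pi^T T P_\pi = T_\pi$ as claimed. Next I would check that the two hypotheses of Theorem~\ref{theorem:MixerHamiltonian} are preserved: symmetry of $T_\pi$ is immediate since $T_\pi^T = (P_\pi^T T P_\pi)^T = P_\pi^T T^T P_\pi = P_\pi^T T P_\pi = T_\pi$ when $T$ is symmetric; and since $(T_\pi)^r = P_\pi^T T^r P_\pi$, we have $(T_\pi^r)_{j,k} = (T^r)_{\pi(j),\pi(k)}$, so the existence of an $r$ with $(T^r)_{j',k'}\neq 0$ for every pair $(j',k')$ is equivalent to the same statement for $T_\pi$ (the pair $(j',k')$ simply corresponds to $(\pi^{-1}(j'),\pi^{-1}(k'))$).

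Finally I would note that the conclusions themselves — that $H_M$ is well defined and Hermitian, that $\Span{B}$ is invariant, and that $U_M$ provides transitions between all pairs of feasible states — are intrinsic properties of the operator $H_M$ and the subspace $\Span{B}$, neither of which depends on how we enumerate $B$. Hence any statement of Theorem~\ref{theorem:MixerHamiltonian} that holds for one ordering holds for all, with the transition matrix transforming by conjugation $T\mapsto P_\pi^T T P_\pi$. I do not anticipate a genuine obstacle here; the only point requiring a little care is bookkeeping the direction of the permutation (whether $T_\pi = P_\pi^T T P_\pi$ or its inverse), which is pinned down by the requirement that $H_M^{(\pi)}$ and $H_M$ be the same operator, as carried out above.
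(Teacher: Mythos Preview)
Your proposal is correct and follows essentially the same approach as the paper: both arguments use $E_\pi = E P_\pi$ to show that $H_{M_\pi}=E_\pi T_\pi E_\pi^T = E T E^T = H_M$, then verify that symmetry and the transition condition~\eqref{eq:Hmfeasiblecond} transfer via $T_\pi^r = P_\pi^T T^r P_\pi$. The only cosmetic difference is that you \emph{derive} the form $T_\pi = P_\pi^T T P_\pi$ by demanding the same operator, whereas the paper posits it and then verifies invariance of $H_M$; these are the same computation read in opposite directions.
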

\begin{proof}
We start by pointing out that the inverse matrix of $P_\pi$ exists and can be written as $P_\pi^{-1} = P_{\pi^{-1}} = P_\pi^T$.

\noindent\ul{The resulting matrix $H_M$ is unchanged.}
Following the derivation in Equation~\eqref{eq:H_ETET}, we have that
$
    H_{M_\pi} = E_\pi T_\pi  E_\pi^T,
$
where the columns of the matrix $E \in \R^{2^n\times |J|}$ consists of the \textit{permuted} feasible computational basis states, i.e., $E_\pi=\{x_{\pi(j)}\}_{j\in{J}}$.
Inserting $T=P_\pi^T T P_\pi$ we have indeed
$
    H_{M_\pi} = E_\pi T_\pi  E_\pi^T = (E_\pi P_\pi^T ) T (P_\pi E_\pi^T) = E T E^T = H_M
$.


\noindent\ul{$T_\pi$ is symmetric if $T$ is.}
Assuming that $T^T=T$ we have that also
$$
    (T_\pi)^T = (P_\pi^T T P_\pi)^T = P_\pi^T T^T P_\pi = P_\pi^T T P_\pi = T_\pi.
$$

\noindent\underline{If the condition in Equation~\eqref{eq:Hmfeasiblecond} holds for $T$ than it also holds for $T_\pi$.}
Using $T_\pi^r = P_\pi^T T^r P_\pi$ we can show that Equation~\eqref{eq:Hmfeasiblecond} holds for the  permuted index pair $(\pi(j), \pi(k))$ for $T_\pi$ if it holds for $(j,k)$ for $T$.
\end{proof}

    In the following, if nothing else is remarked, computational basis states are ordered with respect to increasing integer value, e.g., $\ket{001}, \ket{010}, \ket{111}$.

    Apart form special cases, there is a lot of freedom to choose the transition matrix T that fulfills the conditions of Theorem~\ref{theorem:MixerHamiltonian}. The entries of T will heavily influence the circuit complexity, which will be investigated in Section~\ref{sec:decomp}.
In addition, we have the following property which adds additional flexibility to develop efficient mixers.
\begin{corollary}[Properties of mixers]\label{corollary:properties}
    For a given feasible subspace $\Span{B}$ let $U_{M,B}$ the mixer given by Theorem~\ref{theorem:MixerHamiltonian}.
    For any subspace $\Span{C}$ with $\Span{B}\cap \Span{C}=\{0\}$ or equivalently $B\cap C=\varnothing$, also $U_M=U_{M,B}U_{M,C}$ is a valid mixer for $B$ satisfying the conditions of Equations~\eqref{eq:mixer_preserve} and \eqref{eq:mixer_transition}, see also Figure~\ref{fig:addH}.
\end{corollary}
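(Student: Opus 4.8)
The plan is to show that the extra factor $U_{M,C}$ acts as the identity on the feasible subspace $\Span{B}$, so that $U_M=U_{M,B}U_{M,C}$ agrees with $U_{M,B}$ on $\Span{B}$; the two conditions~\eqref{eq:mixer_preserve} and~\eqref{eq:mixer_transition} for $U_M$ then reduce to those already established for $U_{M,B}$ in Theorem~\ref{theorem:MixerHamiltonian}.

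First I would note that, since $B$ and $C$ are both sets of pairwise orthonormal computational basis states, the hypothesis $\Span{B}\cap\Span{C}=\{0\}$ is equivalent to $B\cap C=\varnothing$, and in that case every element of $C$ is orthogonal to every vector of $\Span{B}$. Writing $H_{M,C}=E_C T_C E_C^T$ as in Equation~\eqref{eq:H_ETET} applied to $C$, for any $\ket{v}\in\Span{B}$ the column vector $E_C^T\ket{v}$ has entries $\braket{x_k}{v}$ with $\ket{x_k}\in C$, all of which vanish; hence $H_{M,C}\ket{v}=0$, i.e. $\Span{B}\subseteq\ker H_{M,C}$. Plugging this into the convergent series~\eqref{eq:eH} gives $U_{M,C}(\beta)\ket{v}=\sum_{m\ge0}\frac{(-i\beta)^m}{m!}H_{M,C}^m\ket{v}=\ket{v}$ for every $\ket{v}\in\Span{B}$ and every $\beta\in\R$.

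Next I would conclude that $U_M(\beta)\ket{v}=U_{M,B}(\beta)U_{M,C}(\beta)\ket{v}=U_{M,B}(\beta)\ket{v}$ for all $\ket{v}\in\Span{B}$, and by induction on $r$ — each application of $U_M$ keeps the vector inside $\Span{B}$, since $U_{M,B}$ preserves it — that $U_M(\beta)^r\ket{v}=U_{M,B}(\beta)^r\ket{v}$. Condition~\eqref{eq:mixer_preserve} for $U_M$ is then immediate from the corresponding property of $U_{M,B}$; and for any feasible basis states $\ket{x},\ket{y}\in B$ one has $\bra{x}U_M(\beta)^r\ket{y}=\bra{x}U_{M,B}(\beta)^r\ket{y}$, so~\eqref{eq:mixer_transition} for $U_M$ follows from~\eqref{eq:mixer_transition} for $U_{M,B}$ (and, $U_{M,B}$ being the exponential of a Hermitian matrix, $r=1$ already works).

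I do not expect a serious obstacle. The one step needing care is the claim that $\Span{B}\cap\Span{C}=\{0\}$ really forces $E_C^T$ to annihilate $\Span{B}$: this uses that $B$ and $C$ are subsets of the computational basis, so that disjointness of the index sets upgrades to orthogonality of the spans, and it also relies on the well-definedness and preservation parts of Theorem~\ref{theorem:MixerHamiltonian} applying to $C$ verbatim. It is worth stressing that commutativity of $U_{M,B}$ and $U_{M,C}$ is \emph{not} needed, because $U_M$ is only ever evaluated on vectors of $\Span{B}$, where $U_{M,C}$ acts trivially.
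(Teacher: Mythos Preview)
Your proposal is correct and follows essentially the same approach as the paper: show that $\Span{B}\subseteq\ker H_{M,C}$, hence $U_{M,C}$ acts as the identity on $\Span{B}$, so $U_M$ coincides with $U_{M,B}$ there and inherits both conditions. Your write-up is in fact more careful than the paper's, which is very terse (and even writes ``$U_{M,C}\ket{v}=I$'' where $\ket{v}$ is meant); your explicit justification via $E_C^T\ket{v}=0$ and the power series, and your remark that commutativity of the two factors is not needed, are welcome clarifications.
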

\begin{proof}
Any $\ket{v} \in B$ is in the null space of $H_{M,C}$, i.e., $H_{M,C} \ket{v} = 0$ and hence $U_{M,C} \ket{v} = I$.
Therefore, $U_{M,B}U_{M,C}\ket{v} = U_{M,B}\ket{v}\in B$, and $U_{M,C}U_{M,B}\ket{v} = U_{M,C}\ket{w} = \ket{w}$ with $\ket{w} \in B$ which means the feasible subspace is preserved.
Condition~\eqref{eq:mixer_transition} follows similarly form the fact that $U_{M,C} \ket{v} = I$ for any $\ket{v} \in B$.
\end{proof}
Corollary~\ref{corollary:properties} naturally holds as well for any linear combination of mixers, i.e., $H_{M,B} + \sum_i a_i H_{M,C_i}$ is a mixer for the feasible subspace $\Span{B}$ as long as $\Span{C_i}\cap\Span{B}=\{0\}, \forall i$.
At first, it might sound counter intuitive that adding more terms to the mixer results in more efficient decomposition into basis gates. However, as we will see in \added{Section}~\ref{sec:constrainedmixers}, it can lead to cancellations due to symmetry considerations.

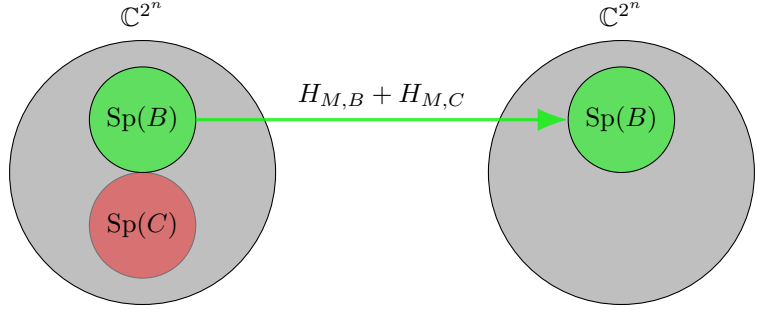
\begin{figure}
    \centering
        \begin{tikzpicture}[scale=.35]

        \begin{scope}[fill opacity=0.5]
        \draw[draw = black,fill=gray] (0,0) circle (5);
        
        \def\shifty{0}
        \def\shift{18}
        
        \draw[fill=green, draw = black,name path=circle 1] (0,2+\shifty) circle (2);
        \draw[fill=red, draw = black,name path=circle 2,opacity=.4] (0,-2+\shifty) circle (2);

        \node[text opacity=1] at (0,6) (A) {$\C^{2^n}$};
        \node[text opacity=1] at (0,-2+\shifty) (B) {$\Span{C}$};
        \node[text opacity=1] at (0,2+\shifty) (C) {$\Span{B}$};
        
        \draw[draw = black,fill=gray] (0+\shift,0) circle (5);
        \draw[fill=green, draw = black,name path=circle 1] (0+\shift,2+\shifty) circle (2);
        
        \node[text opacity=1] at (0+\shift,6) (A2) {$\C^{2^n}$};
        \node[text opacity=1] at (0+\shift,2+\shifty) (B2) {$\Span{B}$};
        
        
        \draw[gray,very thick,-{Latex[length=4mm]},opacity=.25] (2,2) --  (-2+\shift,2+\shifty);
        \draw[green,very thick,-{Latex[length=4mm]},opacity=.75] (2,2) --  node[above,text opacity=1,text=black]{$H_{M,B}+H_{M,C}$}  (-2+\shift,2+\shifty);

        \end{scope}
        \end{tikzpicture}
        \caption{Corollary~\ref{corollary:properties} shows that adding a mixer with support outside $\Span{B}$ is also a valid mixer for $B$.
        }
    \label{fig:addH}
\end{figure}

Next, we describe the structure of the eigensystem of $U_M$.

\begin{corollary}[Eigensystem of mixers]
Given the setting in Theorem~\ref{theorem:MixerHamiltonian} with a symmetric transition matrix $T$.
Let $(\lambda, v)$ an eigenpair of $T$, then $(\lambda, E v)$ is an eigenpair of $H_M$ and $(e^{-it\lambda},  Ev)$ an eigenpair of $U_M$,
where $E = \{ \ket{x_j} \}_{j\in J}$ as defined in Equation~\eqref{eq:H_ETET}.
\label{corollary:feasibleinitialstates}
\end{corollary}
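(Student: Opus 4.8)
The plan is to establish the three claimed eigenpair relationships in sequence, each one feeding into the next. First I would verify that $(\lambda, Ev)$ is an eigenpair of $H_M$. Using the reformulation $H_M = E T E^T$ from Equation~\eqref{eq:H_ETET}, I compute $H_M (E v) = E T E^T E v$. Here the key identity is $E^T E = I \in \R^{|J|\times|J|}$, already noted in the excerpt just before Equation~\eqref{eq:Hpow}, which holds because the columns of $E$ are distinct (hence orthonormal) computational basis states. This collapses the expression to $E T v = E(\lambda v) = \lambda (E v)$, so $E v$ is an eigenvector of $H_M$ with eigenvalue $\lambda$. One should also note that $E v \neq 0$ whenever $v \neq 0$, again because $E$ has orthonormal columns and is therefore injective; this ensures $E v$ is a genuine eigenvector rather than the trivial zero vector.

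Next I would promote this to an eigenpair of $U_M = e^{-i t H_M}$. Since $H_M (E v) = \lambda (E v)$, induction gives $H_M^m (E v) = \lambda^m (E v)$ for all $m \in \N$, and feeding this into the convergent series $e^{-itH_M} = \sum_{m=0}^\infty \frac{(-it)^m H_M^m}{m!}$ from Equation~\eqref{eq:eH} yields
\begin{equation}
U_M (E v) = \sum_{m=0}^\infty \frac{(-it)^m}{m!} H_M^m (E v) = \left(\sum_{m=0}^\infty \frac{(-it\lambda)^m}{m!}\right) (E v) = e^{-it\lambda}(E v),
\end{equation}
so $(e^{-it\lambda}, E v)$ is an eigenpair of $U_M$. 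Alternatively, one can invoke Equation~\eqref{eq:eitH}, writing $U_M(E v) = E\big(\sum_m \frac{(-it)^m T^m}{m!}\big) E^T E v = E e^{-itT} v = e^{-it\lambda} E v$, using $T v = \lambda v$; either route is routine.

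I do not anticipate a serious obstacle here — the statement is essentially a direct corollary of the factorization $H_M = E T E^T$ together with $E^T E = I$, and the symmetry hypothesis on $T$ is what guarantees (via Theorem~\ref{theorem:MixerHamiltonian}) that $H_M$ is Hermitian and $U_M$ well defined, so the exponential series converges and all manipulations are legitimate. The only point deserving a word of care is the injectivity of $E$ (ensuring eigenvectors stay nonzero), and perhaps a remark that since $T$ is real symmetric it has a full orthonormal eigenbasis, so the $E v$ obtained this way in fact span $\Span{B}$ and exhaust the nonzero-eigenvalue structure of $H_M$ on the feasible subspace — though the corollary as stated only asks for the forward implication, so I would keep the proof short and not belabor this.
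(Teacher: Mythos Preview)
Your proposal is correct and follows essentially the same approach as the paper: the paper's proof is the one-line computation $H_M E v = E T E^T E v = E T v = \lambda E v$, followed by the remark that the passage from $H_M$ to $U_M$ is ``general knowledge from linear algebra.'' Your version spells out that passage via the power series and adds the observation that $E$ is injective so $Ev \neq 0$; these are welcome clarifications but not a different route.
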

\begin{proof}
Let $(\lambda, v)$ an eigenpair of $T$. Then,
$H_M E v = E T E^T E v = E T v = \lambda E v$,
so $(\lambda, Ev)$ is an eigenpair of $H_M$.
The connection between $H_M$ and $U_M$ is general knowledge from linear algebra.
\end{proof}
An example\deleted{s} illustrating Corollary~\ref{corollary:feasibleinitialstates} is provided by the transition matrix $T\in\R^{4\times4}$ with zero diagonal and all other entries equal to one. A unit eigenvector of T, which fulfills Theorem~\ref{theorem:MixerHamiltonian}, is $v=1/2(1,1,1,1)^T$.
For any $B=\{\ket{z_1}, \ket{z_2},$ $\ket{z_3}, \ket{z_4}\}$ 
the uniform superpositions of these states is an eigenvector, since 
$$
    \frac{1}{\|v\|_2} Ev = \frac{1}{2}(\ket{z_1}, \ket{z_2}, \ket{z_3}, \ket{z_4})(1,1,1,1)^T =
    \frac{1}{2}(\ket{z_1} + \ket{z_2} + \ket{z_3} + \ket{z_4}).
$$
This result holds irrespective of what the states are and which dimension they have.

\begin{theorem}[Products of mixers for subspaces]\label{theorem:TrotterizedMixerHamiltonian}
Given the same setting as in Theorem~\ref{theorem:MixerHamiltonian}.
For any decomposition of $T$ into a sum of $Q$ symmetric matrices $T_q$, in the following sense
\begin{equation}
    T=\sum_{q=1}^Q T_q, \quad  (T_q)_{i,j} = (T_q)_{j,i} =
    \begin{cases}
        \text{either } &(T)_{ij}, \\
        \text{or } &0,
    \end{cases}
\end{equation}
we construct the mixing operator via
\begin{equation}
    U_M(\beta) = \prod_{\underset{q_n\in \{1,2,\cdots,Q\}}{n=1}}^N e^{-i\beta T_{q_n}}.
\end{equation}
If all entries of $T$ are positive, then $U_M$ provides transitions between all pairs of feasible states, i.e. condition~\eqref{eq:mixer_transition} is fulfilled, if for all $1\leq j,k \leq |J|$ 
there exist $r_m\in\N\cup\{0\}$ (possibly depending on the pair) such that
\begin{equation}
 \Big(\prod_{\underset{q_m\in Q}{m=1}}^M T_{q_m}^{r_m}\Big)_{j,k} \neq 0.
    \label{eq:Hmfeasiblecondsum}
\end{equation}
\end{theorem}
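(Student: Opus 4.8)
The plan is to reduce the statement about products of Trotterized exponentials to the already-established machinery of Theorem~\ref{theorem:MixerHamiltonian}, exploiting the fact that each $T_q$ is itself a symmetric ``sub-pattern'' of $T$. First I would note that each factor $e^{-i\beta T_{q_n}}$ is, by Theorem~\ref{theorem:MixerHamiltonian} applied to the transition matrix $T_{q_n}$, a well-defined unitary that preserves $\Span{B}$; hence so is the product $U_M(\beta)$, and condition~\eqref{eq:mixer_preserve} is immediate. The real content is condition~\eqref{eq:mixer_transition}, and for that I would work entirely on the $|J|$-dimensional side, using the conjugation $E^T(\cdot)E$ from Equation~\eqref{eq:eitH}: since $E^T e^{-i\beta T_{q_n}} E = e^{-i\beta T_{q_n}}$ (as operators on $\C^{|J|}$, using $E^TE=I$), we get
\begin{equation}
    E^T U_M(\beta) E = \prod_{n=1}^N e^{-i\beta T_{q_n}},
\end{equation}
so it suffices to show this product of $|J|\times|J|$ unitaries has a nonzero $(j,k)$ entry for every pair, for suitable $\beta$ (possibly after repeating the whole mixer $r$ times, as in Remark~1).

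Next I would expand each $e^{-i\beta T_{q_n}}$ as its power series and multiply out, exactly as in the proof of Theorem~\ref{theorem:MixerHamiltonian}. The $(j,k)$ entry of $\big(\prod_n e^{-i\beta T_{q_n}}\big)^R$ becomes an analytic function $f(\beta)$ whose Taylor coefficients are (up to nonzero combinatorial and $i$-power factors) sums of entries of products of the form $\prod_m T_{q_m}^{r_m}$ over various orderings and exponents. The key positivity observation is that, because all entries of $T$ are nonnegative and each $T_q$ is a $0/$entry-of-$T$ pattern, every $T_q$ has nonnegative entries; therefore every product $\prod_m T_{q_m}^{r_m}$ has nonnegative entries, and so does any nonnegative linear combination of such products. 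Consequently there can be no cancellation between the terms contributing to a fixed power of $\beta$: if even one product $\prod_m T_{q_m}^{r_m}$ appearing in a given Taylor coefficient has a strictly positive $(j,k)$ entry, that whole coefficient is strictly positive, hence nonzero. Taking $R$ large enough that the word $q_1^{r_1}q_2^{r_2}\cdots q_M^{r_M}$ from hypothesis~\eqref{eq:Hmfeasiblecondsum} is realizable as a subword pattern inside $R$ repetitions of $(q_1,\dots,q_N)$ — which is always possible by choosing $R=M$ repetitions and reading off the appropriate block from each copy — gives a nonzero Taylor coefficient, so $f$ is not the zero function, so $f(\beta^*)\neq 0$ for some $\beta^*\in\R$. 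Since the $(j,k)$ entry of $E^T U_M(\beta^*)^R E$ equals $\bra{x_j}U_M(\beta^*)^R\ket{x_k}$, condition~\eqref{eq:mixer_transition} follows with this $\beta^*$ and $r=R$.

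The main obstacle I anticipate is bookkeeping: making precise how a given product $\prod_{m=1}^M T_{q_m}^{r_m}$ from the hypothesis shows up among the terms of $\big(\prod_{n=1}^N e^{-i\beta T_{q_n}}\big)^R$ for an explicit choice of $R$, and arguing that no other term of the same total $\beta$-degree can cancel it. The positivity argument handles the non-cancellation cleanly, so the delicate point is purely the combinatorial realizability claim — i.e., that repeating the cycle $T_{q_1},\dots,T_{q_N}$ enough times contains, as a product of consecutive factors with multiplicities, the prescribed word; I would handle this by taking $R$ to be $M$ (or $MN$ to be safe) copies and, in the $m$-th copy, extracting the factor $T_{q_m}$ raised to the needed power by reading the appropriate Taylor term of that copy's exponential while taking the degree-$0$ (identity) term from all other factors in that copy. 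Everything else is a direct transcription of the analyticity argument already used for Theorem~\ref{theorem:MixerHamiltonian}.
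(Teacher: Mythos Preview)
Your proposal is correct and follows essentially the same strategy as the paper: expand the product of matrix exponentials as a power series in $\beta$, observe that each Taylor coefficient is a nonnegative combination of entries of products $\prod_m T_{q_m}^{r_m}$ (by positivity of the $T_q$'s), use hypothesis~\eqref{eq:Hmfeasiblecondsum} to exhibit a strictly positive contribution and hence a nonzero coefficient, and then invoke the analyticity argument from Theorem~\ref{theorem:MixerHamiltonian}.

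The one place where you are more careful than the paper is the ``realizability'' bookkeeping with the repetition parameter $R$. The paper's proof silently identifies the sequence $(q_m)_{m=1}^M$ appearing in the hypothesis with the sequence $(q_n)_{n=1}^N$ defining the mixer (indeed it writes the expansion with index $M$ throughout), so that the required product $T_{q_1}^{r_1}\cdots T_{q_M}^{r_M}$ appears directly as a single multi-index term in the expansion of $U_M(\beta)$ itself, and $r=1$ suffices. Under that reading your entire realizability paragraph collapses to a one-liner. Your more cautious reading---allowing the hypothesis word to differ from the mixer word and recovering it inside $U_M(\beta)^R$ by picking one factor from each of $R$ copies---is a harmless strengthening, but be aware it tacitly assumes each index $q_m$ from the hypothesis actually occurs somewhere in $(q_1,\dots,q_N)$; without that the theorem as you read it would simply be false, so this is an implicit constraint rather than a gap in your argument.
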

\begin{proof}
Combining Equations~\eqref{eq:Hpow} and~\eqref{eq:eitH} we have \deleted{that}
\begin{equation}
\begin{split}
    \bra{x_{j}} U_M(\beta) \ket*{x_{k}} =& 
    \sum_{j_1=0,j_2=0,\cdots, j_M=0}^\infty 
    \frac{(-it)^{j_1+j_2+\cdots+j_m}
    (T_{q_1}^{j_1}T_{q_2}^{j_2}\cdots T_{q_M}^{j_M})_{j,k}}
    {j_1!j_2!\cdots j_m!}
    \\
    =&
    \sum_{j=1}^\infty 
    \frac{(-it)^j}{j!}
    \sum_{j_1,\cdots,j_M \text{ s.t. } \left(\sum_{m=1}^M j_m\right)=j}
    (T_{q_1}^{j_1}T_{q_2}^{j_2}\cdots T_{q_M}^{j_M})_{j,k}
    \added{.}
\end{split}
\end{equation}
Using that $T$ only \added{h}as positive entries and the condition in Equation~\eqref{eq:Hmfeasiblecondsum}\added{,} the same argument as in Theorem~\ref{theorem:MixerHamiltonian} can be used to show that $U_M(\beta)$ is not the zero function and therefore we have transitions between all pairs of feasible states.
\end{proof}
    
As Theorem~\ref{theorem:MixerHamiltonian} 
leaves a lot of freedom for choosing valid transition matrices we will continue by describing important examples for $T$.

\subsection{Transition matrices for mixers}\label{sec:transmatrix}

\begin{figure}
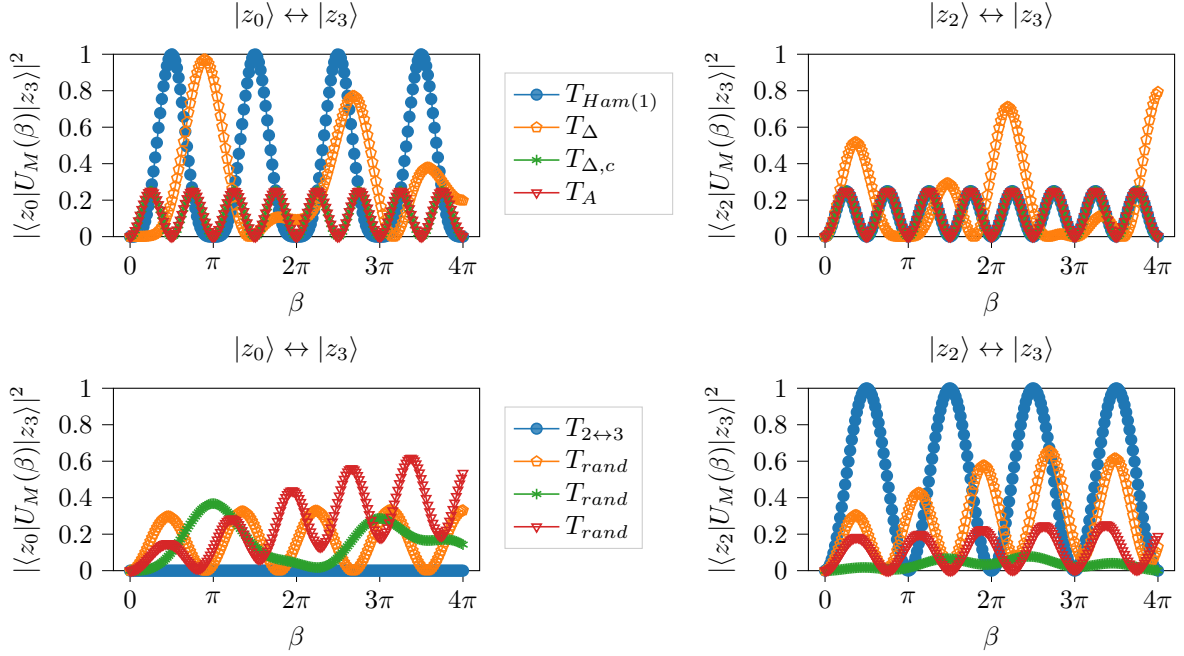

    \centering
        \centering
        \input{snf_00-11.tex}
        \input{snf_10-11.tex}
    \vfill
        \input{orr_00-11.tex}
        \input{orr_10-11.tex}
        
    \caption{Examples of \added{the squared} overlap between two states for the case $|B|=4$. The \added{squared} overlap is independent of what the states in $B=\{\ket{z_0}, \ket{z_1}, \ket{z_2}, \ket{z_3}\}$ are.
    The comparison for different $T$ show that there exists a $\beta$ such that the overlap is nonzero,
    except for $T_{2\leftrightarrow 3}$ which, as expected, does not provide transitions between $\ket{z_0}$ and $\ket{z_3}$.
    }
    \label{fig:overlap}
\end{figure}
Theorem~\ref{theorem:MixerHamiltonian} provides conditions for the construction of mixer Hamiltonians that preserve the feasible subspace and provide transitions between all pairs of feasible computational basis states, namely
\begin{enumerate}
    \item $T\in\R^{|J|\times|J|}$ is symmetric, and
    \item for all $1\leq j,k \leq |J|$ there exist\added{s} an $r_{j,k}\in\N\cup\{0\}$ such that $(T^r)_{j,k} \neq 0$.
\end{enumerate}
Remarkably, these conditions depend only on the dimension of the feasible subspace $|J|=dim(\Span{B})=|B|$, \replaced{and}{but} are \ul{independent of the specific states} that $B$ consists of.
In addition, Corollary~\ref{corollary:independenceofordering} shows that these conditions are robust with respect to reordering of rows if in addition columns are reordered in the same way.
Moreover, Equation~\eqref{eq:overlap} shows \replaced{also that}{that also} the overlap between computational basis states $\ket{x_j},\ket{x_k}\in B$ is \ul{independent of the specific states} that $B$ consists of and only depends on T, since the right hand side of the expression
\begin{equation}
  \bra{x_{j}} U_M(t) \ket*{x_{k}} =
    \sum_{m=0}^\infty \frac{(-it)^m}{m!} (T^m)_{j,k},
\end{equation}
is independent of \added{the} elements in $B$.
This allows \added{us} to describe and analyze valid transition matrices by only knowing the number of feasible states, i.e., $|B|$. What these specific states are is irrelevant, unless one wants to look at what an optimal mixer is, which we will come back to in Section~\ref{sec:optimality}.
Figure~\ref{fig:overlap} provides a comparison of some mixers described in the following with respect to the overlap between different states.

In the following, we denote the matrix for pairs of indices whose binary representation have Hamming distance equal to $d$ as 
\begin{equation}
    T_{\Ham(d)}, \quad \text{ with } ( T_{\Ham(d)})_{i,j}=\begin{cases}
    1, & \  \text{if } d_\text{Hamming}($bin$(i)$, bin$(j))=d,\\
    0, & \ \text{else},
    \end{cases}
    \label{eq:THamming}
\end{equation}
Examples of the structure of $T_{\Ham(d)}$ can be found in Figure~\ref{fig:Tn48}.
\begin{figure}
    \centering
    \begin{subfigure}{0.24\textwidth}
        \centering
        \includegraphics[width=\textwidth]{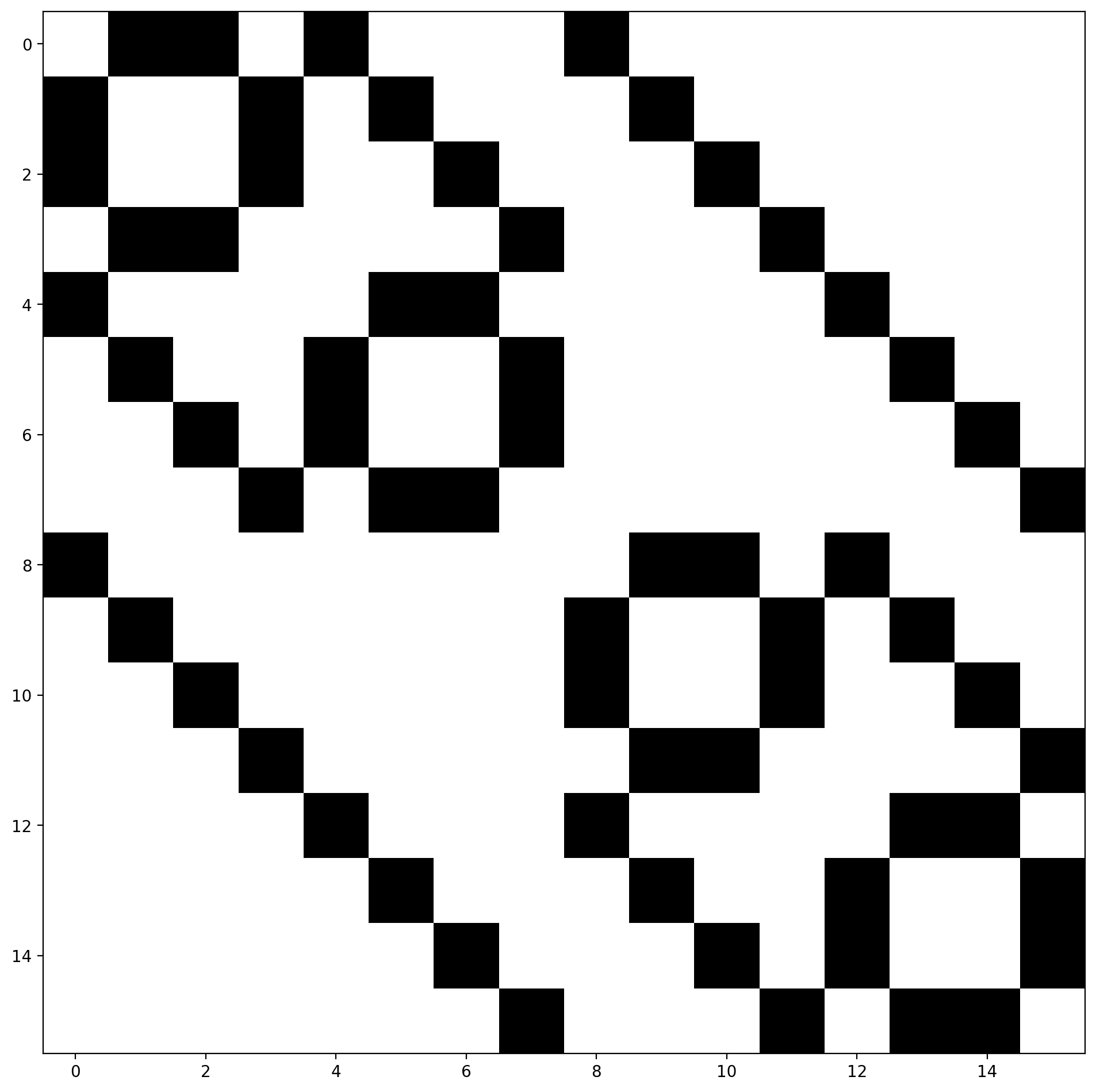}
        \caption{$T_{\Ham(1)}$, $|J|=2^4$.}
    \end{subfigure}
    \hfill
    \begin{subfigure}{0.24\textwidth}
        \includegraphics[width=\textwidth]{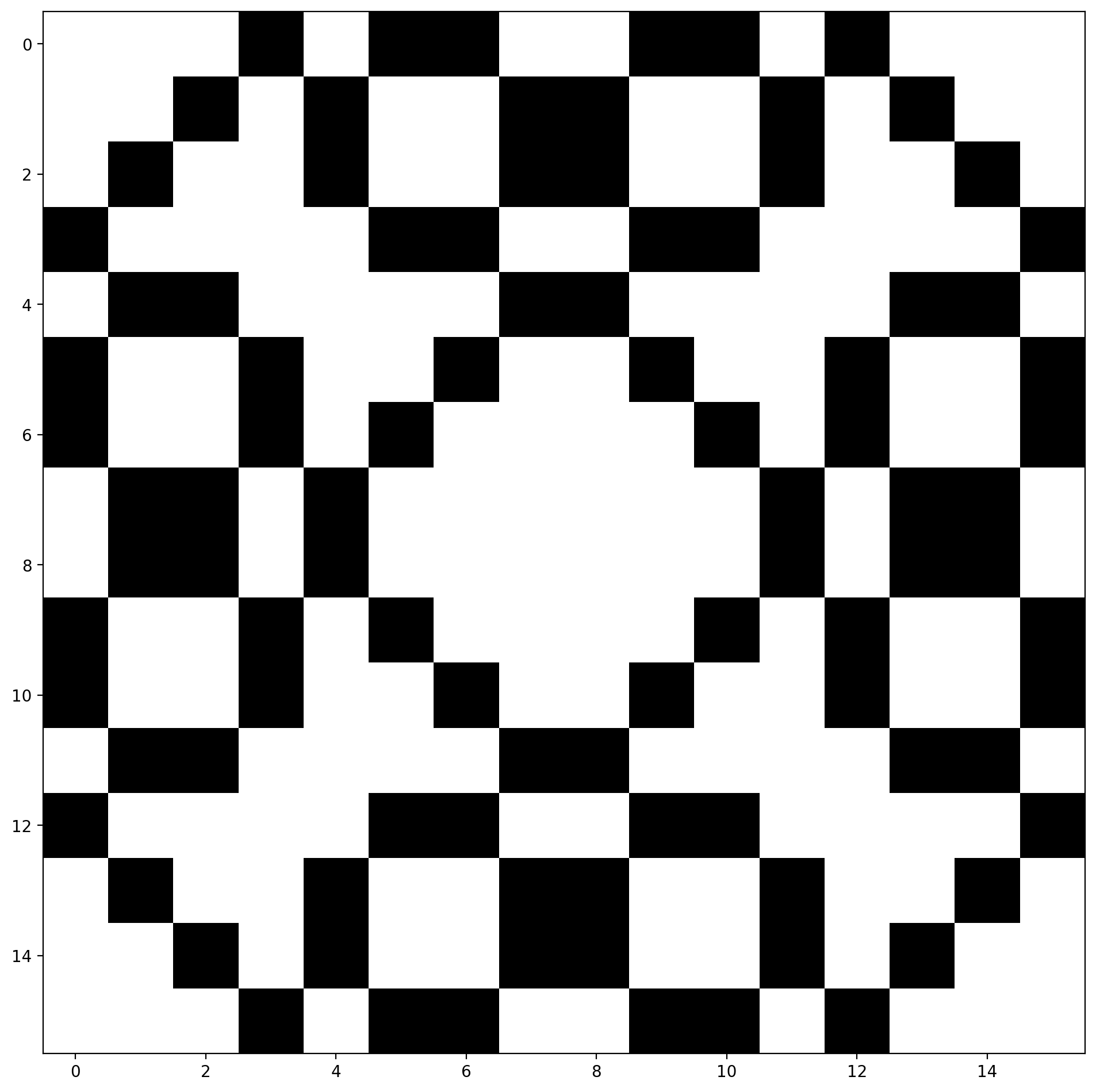}
        \caption{$T_{\Ham(2)}$, $|J|=2^4$.}
    \end{subfigure}
    \hfill
    \begin{subfigure}{0.24\textwidth}
        \includegraphics[width=\textwidth]{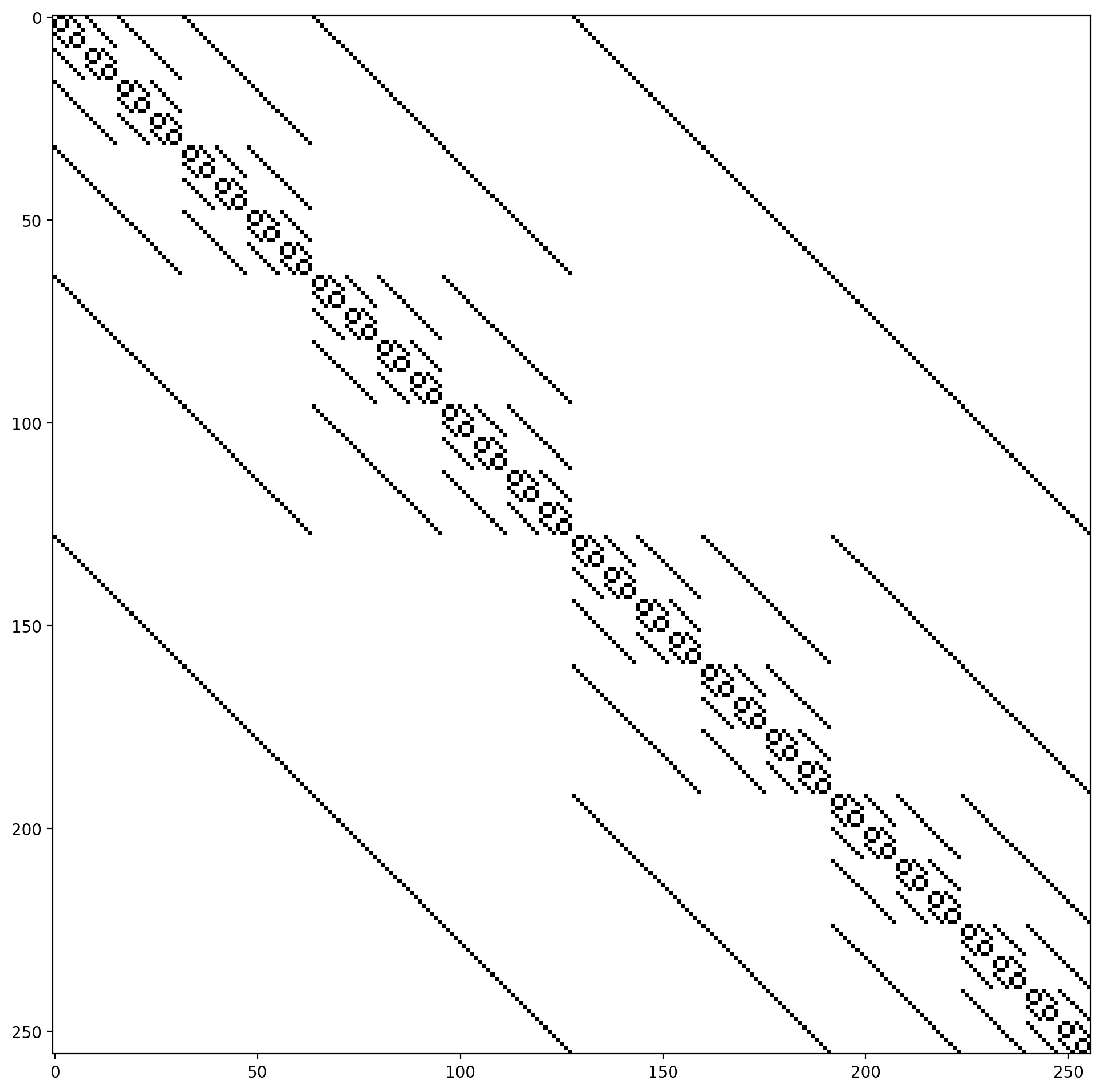}
        \caption{$T_{\Ham(1)}$, $|J|=2^8$.}
    \end{subfigure}
    \hfill
    \begin{subfigure}{0.24\textwidth}
        \includegraphics[width=\textwidth]{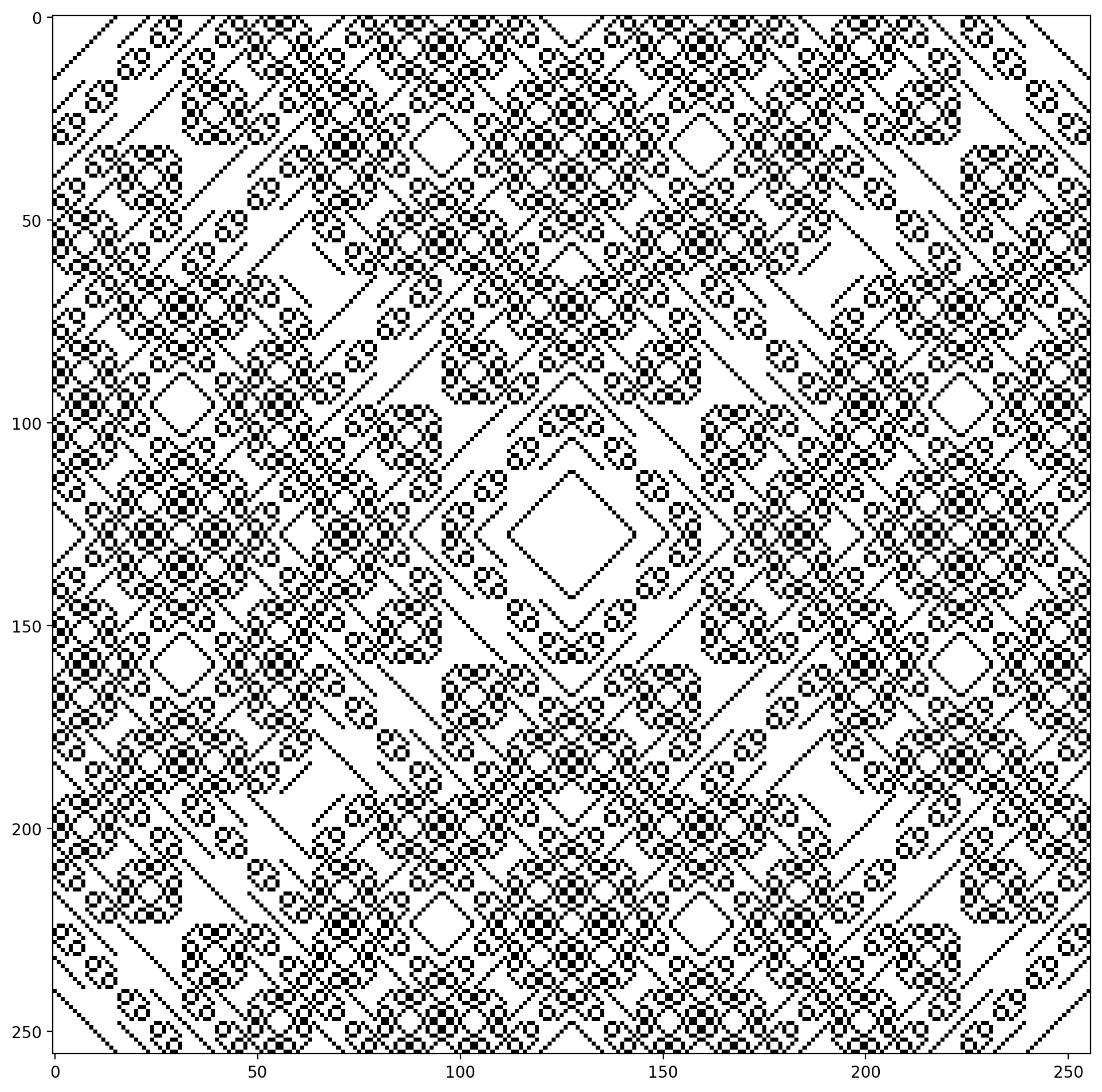}
        \caption{$T_{\Ham(4)}$, $|J|=2^8$.}
    \end{subfigure}
    \caption{Examples of the structure of $T_{\Ham(d)}$. The black color represents non-vanishing entries equal to one, representing pairs with the specified Hamming distance.}
    \label{fig:Tn48}
\end{figure}
Furthermore, it will be useful to denote the matrix which has two non\deleted{e}-zero entries at 
$(k,l)$ and $(l,k)$ as
\begin{equation}
    T_{k\leftrightarrow l}, \quad \text{ with } ( T_{k\leftrightarrow l})_{i,j}=\begin{cases}
    1, & \ \text{if } (i,j)=(k,l) \text{ or } (i,j)=(l,k),\\
    0, & \ \text{else.}
    \end{cases}
\end{equation}
Before we start, we point out that the diagonal entries of $T$ can be chosen to be zero, because $|(T^0)_{j,j}|=1\neq0$ for all $j\in J$.
Although trivial, we will repeatedly use that $v=\frac{1}{\sqrt{|J|}}( 1, 1, \cdots , 1)^T$ is an eigenvector of a matrix $F\in \C^{|J|\times |J|}$ if the sum of all rows are a multiple of $v$. 

\subsubsection{Hamming distance one mixer \texorpdfstring{$T_{\Ham(1)}$}{}}\label{sec:T_Ham1}
The matrix $T_{\Ham(1)}\in\R^{|J|\times |J|}$ fulfills Theorem~\ref{theorem:MixerHamiltonian} when $|J|=2^n, n \in \N$.
The symmetry of $T_{\Ham(1)})$ is due to the fact that the Hamming distance is a symmetric function.
Using the identity
\begin{equation}
        T_{\Ham(k)}T_{\Ham(1)}=T_{\Ham(1)}T_{\Ham(k)}=(n-(k-1))T_{\Ham(k-1)}+(k+1)T_{\Ham(k+1)}
\end{equation}
it can be shown that
\begin{equation}
T_{\Ham(1)}^k = \sum_{j=1}^{k-1} c_k T_{\Ham(j)} + k!T_{\Ham(k)},
\end{equation}
where $c_k$ are real coefficients.
Therefore, it is clear that $T_{\Ham(1)}^k$ reaches all states with Hamming distance $K$.
Furthermore, $v=\frac{1}{\sqrt{2^n}}( 1, 1, \cdots , 1)^T$ is a unit eigenvector of $T_{\Ham(1)}$ since the sum of each row is $n$. This is because there are exactly $n$ other states with Hamming distance one for each bitstring.

\subsubsection{All-to-all mixer \texorpdfstring{$T_A$}{}}\label{sec:T_A}
We denote the matrix with all but the diagonal entries equal to one as
\begin{equation}
    T_A, \quad \text{ with } ( T_A)_{i,j}=\begin{cases}
    1, & \ \text{if } i\neq j,\\
    0, & \ \text{else}.
    \end{cases}\\
\end{equation}
Trivially, $T_A\in\R^{|J|\times |J|}$ fulfilles Theorem~\ref{theorem:MixerHamiltonian} and $v=\frac{1}{\sqrt{|J|}}( 1, 1, \cdots , 1)^T$ is a unit eigenvector of $T_A $ since the sum of each row is $|J|-1$.

\subsubsection{(Cyclic) Nearest int\added{eger} mixer \texorpdfstring{$T_{\Delta}$/$T_{\Delta,c}$}{}}\label{sec:T_Delta}
Inspired by the stencil of finite-difference methods we introduce
$T_{\Delta}$, $T_{\Delta,c}\in\R^{|J|\times |J|}$ as \deleted{the} matrices with off-diagonal \added{entries} equal to one
\begin{equation}\label{eq:Tnearestint}
\begin{split}
    T_{\Delta},& \quad \text{ with } ( T_{\Delta})_{i,j}=\begin{cases}
    1, & \ \text{if } i=j+1 \vee i=j-1,\\
    0, & \ \text{else},
    \end{cases} \\
        T_{\Delta,c},& \quad \text{ with } ( T_{\Delta,c})_{i,j}=\begin{cases}
    1, & \ \text{if } i=(j+1)\operatorname{mod}n \vee i=(j-1)\operatorname{mod} n,\\
    0, & \ \text{else}.
    \end{cases}
    \end{split}
\end{equation}
Both matrices fulfill Theorem~\ref{theorem:MixerHamiltonian}. Symmetry holds by definition and it is easy to see that the k-th off-diagonal of $T_{\Delta}^k$ and $T_{\Delta,c}^k$ is nonzero for $1\leq k \leq |J|$.

For the nearest integer mixer $T_{\Delta}$ it is known that \begin{equation}
\label{eq:eigvecT_delta}
    v_k=(\sin(c),\allowbreak \sin(2c),\allowbreak \cdots,\allowbreak \sin(|J|c)), \ c=\frac{k \pi}{|J|+1}
\end{equation}
 are eigenvectors for $1\leq k \leq |J|$.
For the cyclic nearest integer mixer, we have that the sum of each row/column of $T_{\Delta,c}$ is equal to two (except for $n=1$ when it is one).
Therefore, $v=\frac{1}{\sqrt{|J|}}( 1, 1, \cdots , 1)^T$ is a unit eigenvector.

\begin{figure}
    \centering
    \bgroup
    \setlength\tabcolsep{-4pt}
    \begin{tabular}{>{\raggedright\arraybackslash}p{0.2\linewidth}
    >{\raggedright\arraybackslash}c@{\hspace{-32pt}}c@{\hspace{-32pt}}c@{\hspace{-32pt}}c@{\hspace{-32pt}}c@{\hspace{-32pt}}c@{\hspace{-28pt}}c@{\hspace{-28pt}}c@{\hspace{-28pt}}c@{\hspace{-28pt}}c@{\hspace{-28pt}}c}
        &5&6&7&8&9&10&11&12&13&14&15\\
         \scalebox{.75}{\shortstack[l]{$1\times$ $(T_1\!  +\!  T_2)$\\[1.2\baselineskip]}}
        &
         \raisebox{-.25\height}{\includegraphics[width=0.13\linewidth]{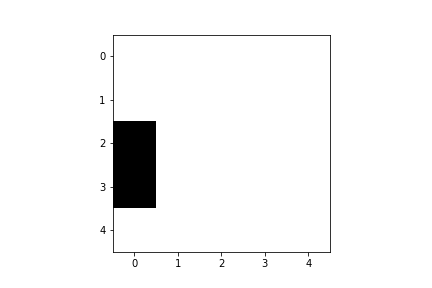} }&
         \raisebox{-.25\height}{\includegraphics[width=0.13\linewidth]{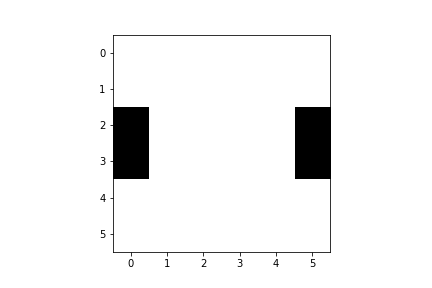} }&
         \raisebox{-.25\height}{\includegraphics[width=0.13\linewidth]{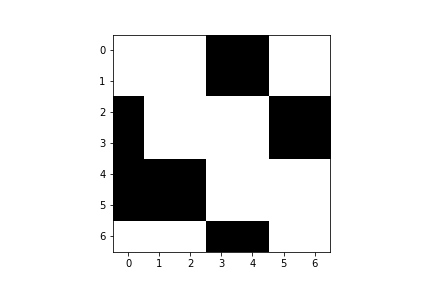} }&
         \raisebox{-.25\height}{\includegraphics[width=0.13\linewidth]{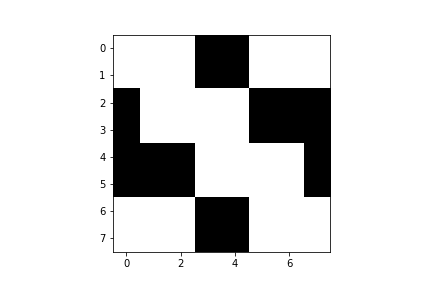} }&
         \raisebox{-.25\height}{\includegraphics[width=0.13\linewidth]{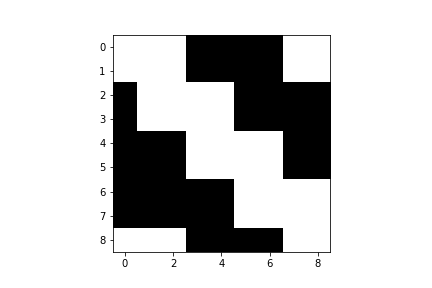} }&
         \raisebox{-.25\height}{\includegraphics[width=0.13\linewidth]{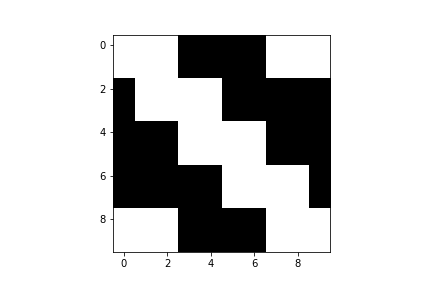}} &
         \raisebox{-.25\height}{\includegraphics[width=0.13\linewidth]{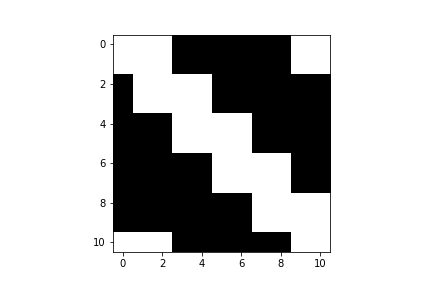}} &
         \raisebox{-.25\height}{\includegraphics[width=0.13\linewidth]{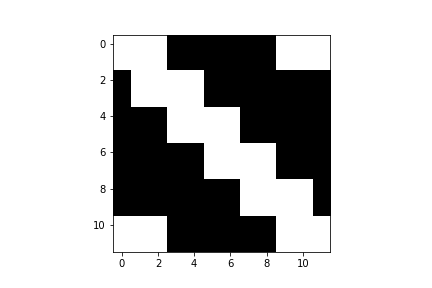}} &
         \raisebox{-.25\height}{\includegraphics[width=0.13\linewidth]{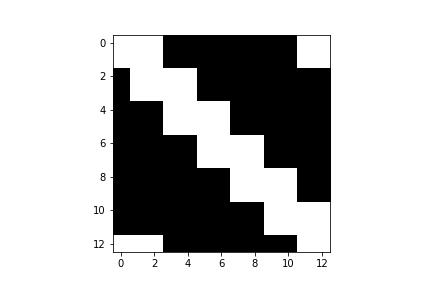}} &
         \raisebox{-.25\height}{\includegraphics[width=0.13\linewidth]{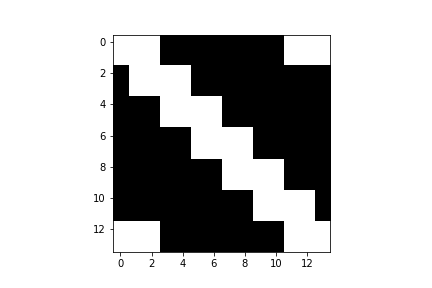}} &
         \raisebox{-.25\height}{\includegraphics[width=0.13\linewidth]{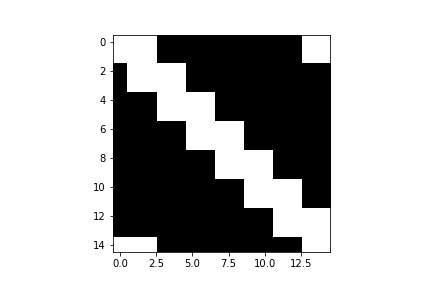}}\\[-4pt]
        \scalebox{.75}{\shortstack[l]{$2\times$ $(T_1\!  +\!  T_2)$\\[1.2\baselineskip]}}
         &
         \raisebox{-.25\height}{\includegraphics[width=0.13\linewidth]{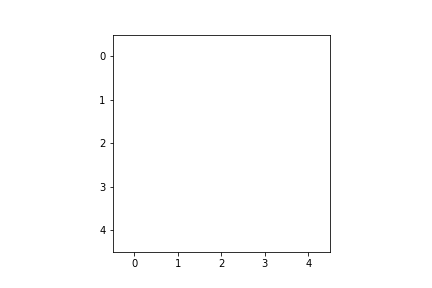} }&
         \raisebox{-.25\height}{\includegraphics[width=0.13\linewidth]{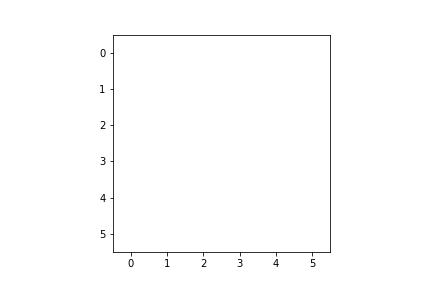} }&
         \raisebox{-.25\height}{\includegraphics[width=0.13\linewidth]{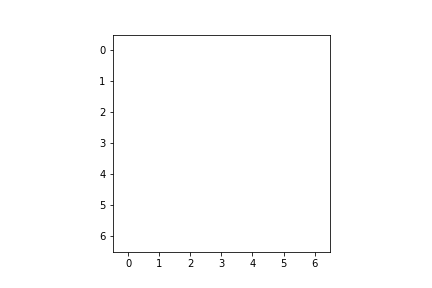} }&
         \raisebox{-.25\height}{\includegraphics[width=0.13\linewidth]{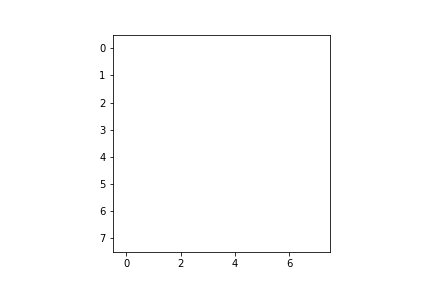} }&
         \raisebox{-.25\height}{\includegraphics[width=0.13\linewidth]{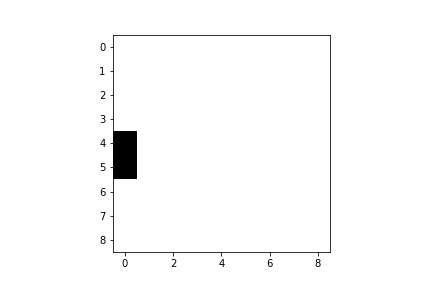} }&
         \raisebox{-.25\height}{\includegraphics[width=0.13\linewidth]{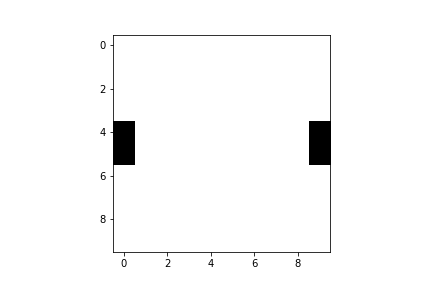}} &
         \raisebox{-.25\height}{\includegraphics[width=0.13\linewidth]{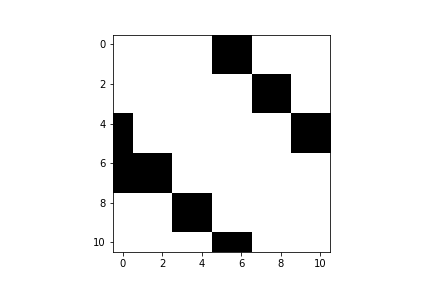}} &
         \raisebox{-.25\height}{\includegraphics[width=0.13\linewidth]{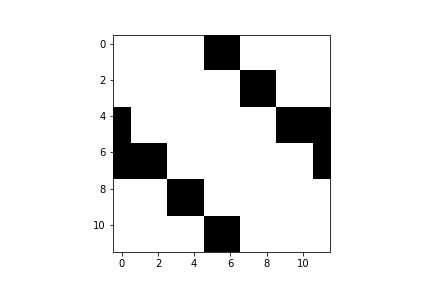}} &
         \raisebox{-.25\height}{\includegraphics[width=0.13\linewidth]{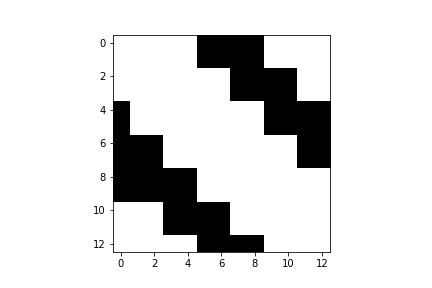}} &
         \raisebox{-.25\height}{\includegraphics[width=0.13\linewidth]{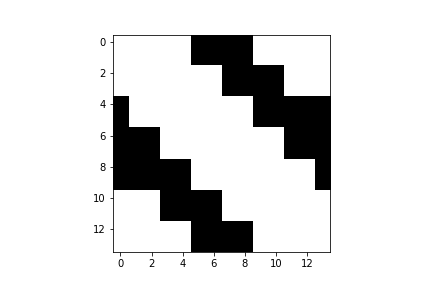}} &
         \raisebox{-.25\height}{\includegraphics[width=0.13\linewidth]{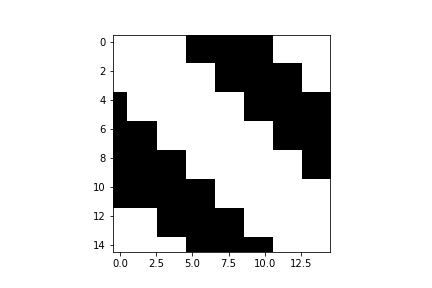}}\\[-4pt]
         \scalebox{.75}{\shortstack[l]{$3\times$ $(T_1\!  +\!  T_2)$\\[1.2\baselineskip]}}
         &
         \raisebox{-.25\height}{\includegraphics[width=0.13\linewidth]{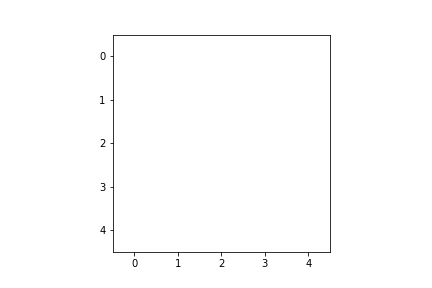} }&
         \raisebox{-.25\height}{\includegraphics[width=0.13\linewidth]{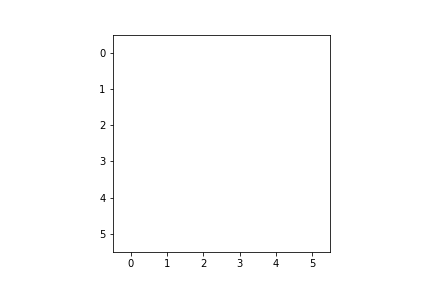} }&
         \raisebox{-.25\height}{\includegraphics[width=0.13\linewidth]{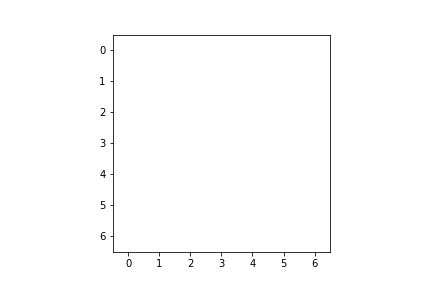} }&
         \raisebox{-.25\height}{\includegraphics[width=0.13\linewidth]{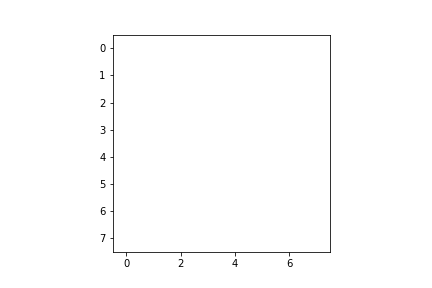} }&
         \raisebox{-.25\height}{\includegraphics[width=0.13\linewidth]{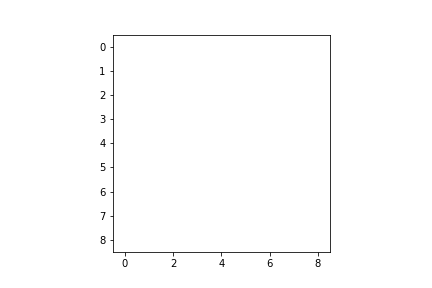} }&
         \raisebox{-.25\height}{\includegraphics[width=0.13\linewidth]{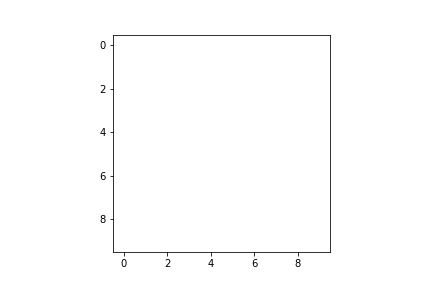}} &
         \raisebox{-.25\height}{\includegraphics[width=0.13\linewidth]{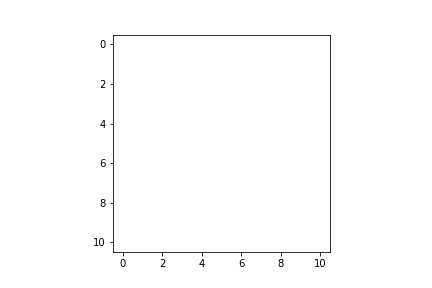}} &
         \raisebox{-.25\height}{\includegraphics[width=0.13\linewidth]{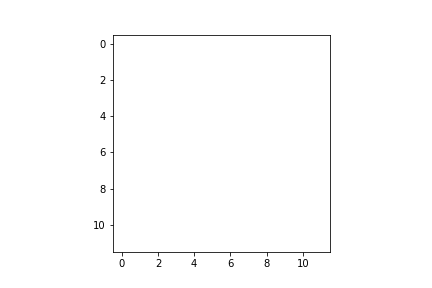}} &
         \raisebox{-.25\height}{\includegraphics[width=0.13\linewidth]{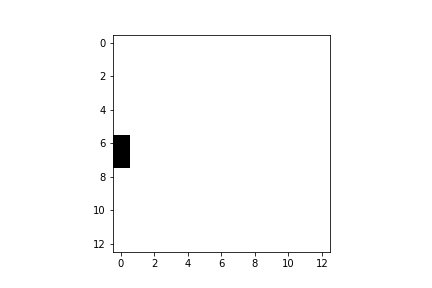}} &
         \raisebox{-.25\height}{\includegraphics[width=0.13\linewidth]{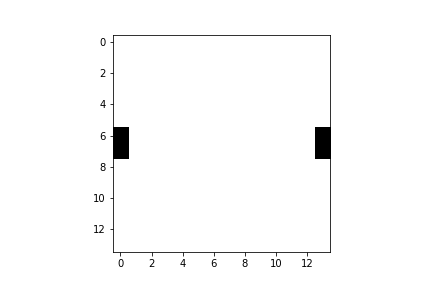}} &
         \raisebox{-.25\height}{\includegraphics[width=0.13\linewidth]{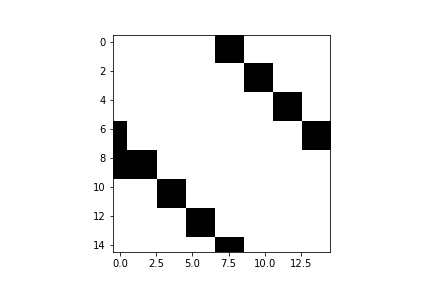}}\\[-4pt]
        \scalebox{.75}{\shortstack[l]{
        $(T_1\!  +\!  T_2)
        +T_{O(3)}\!  +\!  T_{E(3)}$\\[1.2\baselineskip]}}
         &
         \raisebox{-.25\height}{\includegraphics[width=0.13\linewidth]{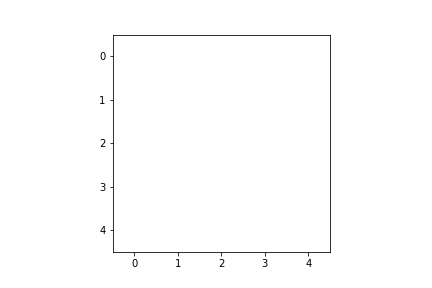} }&
         \raisebox{-.25\height}{\includegraphics[width=0.13\linewidth]{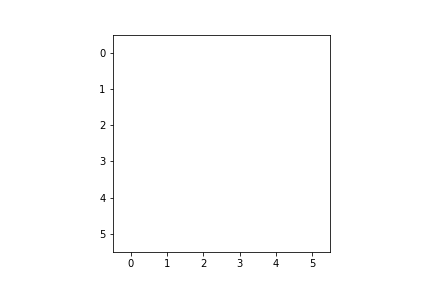} }&
         \raisebox{-.25\height}{\includegraphics[width=0.13\linewidth]{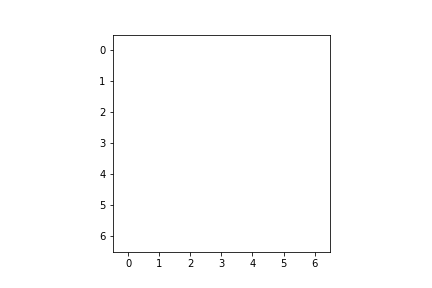} }&
         \raisebox{-.25\height}{\includegraphics[width=0.13\linewidth]{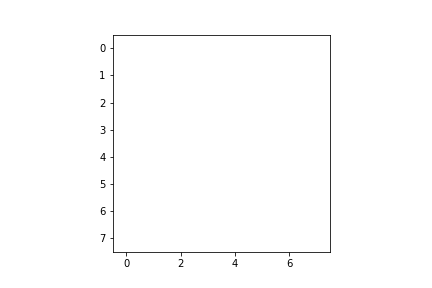} }&
         \raisebox{-.25\height}{\includegraphics[width=0.13\linewidth]{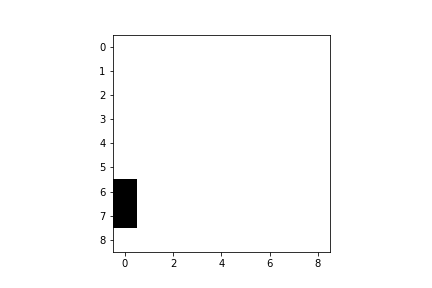} }&
         \raisebox{-.25\height}{\includegraphics[width=0.13\linewidth]{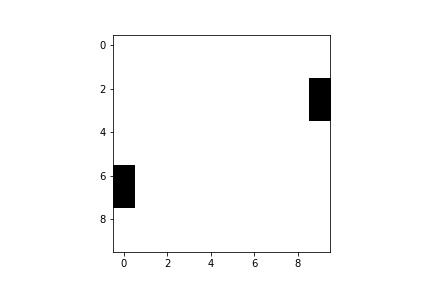}} &
         \raisebox{-.25\height}{\includegraphics[width=0.13\linewidth]{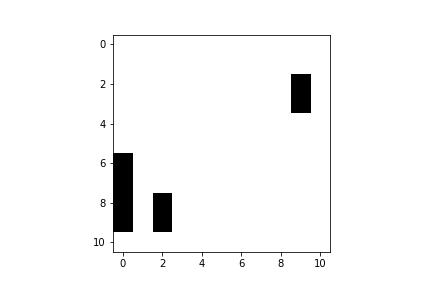}} &
         \raisebox{-.25\height}{\includegraphics[width=0.13\linewidth]{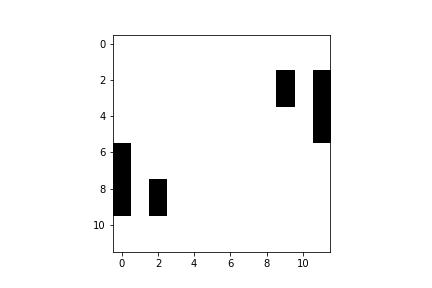}} &
         \raisebox{-.25\height}{\includegraphics[width=0.13\linewidth]{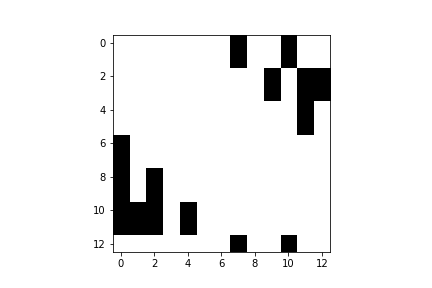}} &
         \raisebox{-.25\height}{\includegraphics[width=0.13\linewidth]{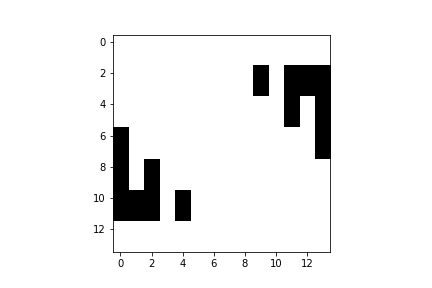}} &
         \raisebox{-.25\height}{\includegraphics[width=0.13\linewidth]{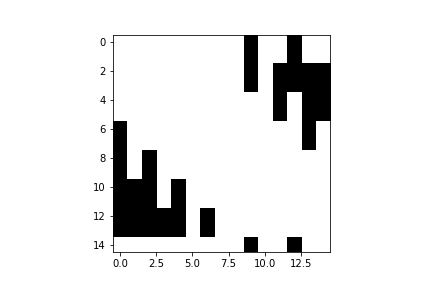}}\\[-4pt]
        \scalebox{.75}{\shortstack[l]{
        $(T_1\!  +\!  T_2)
        +T_{O(4)}\!  +\!  T_{E(4)}$\\[1.2\baselineskip]}}
         &
         &
         \raisebox{-.25\height}{\includegraphics[width=0.13\linewidth]{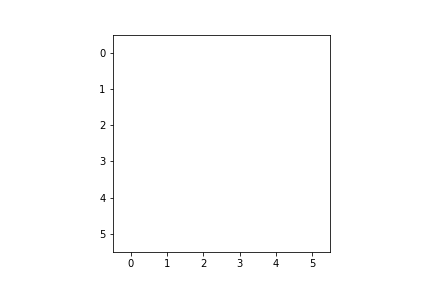} }&
         \raisebox{-.25\height}{\includegraphics[width=0.13\linewidth]{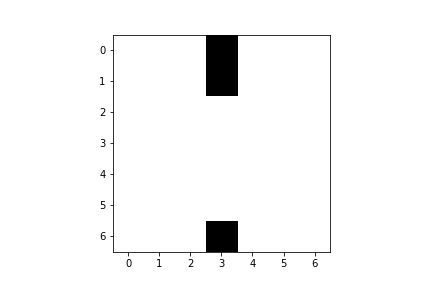} }&
         \raisebox{-.25\height}{\includegraphics[width=0.13\linewidth]{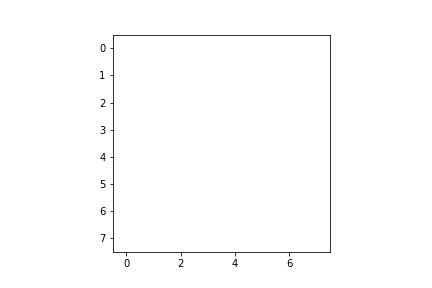} }&
         \raisebox{-.25\height}{\includegraphics[width=0.13\linewidth]{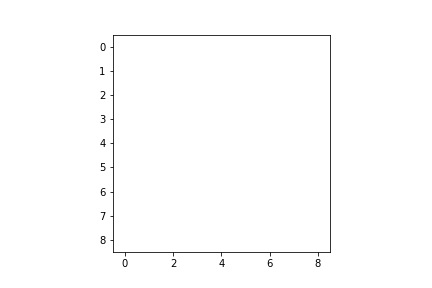} }&
         \raisebox{-.25\height}{\includegraphics[width=0.13\linewidth]{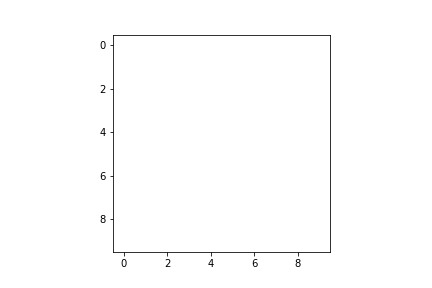}} &
         \raisebox{-.25\height}{\includegraphics[width=0.13\linewidth]{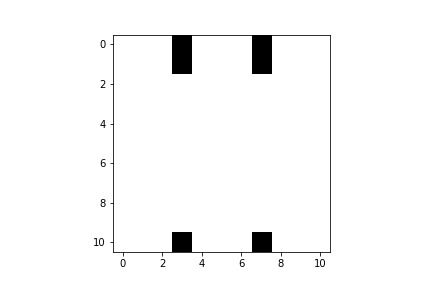}} &
         \raisebox{-.25\height}{\includegraphics[width=0.13\linewidth]{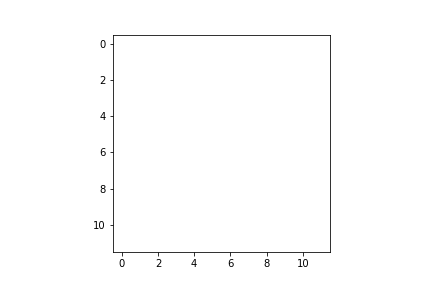}} &
         \raisebox{-.25\height}{\includegraphics[width=0.13\linewidth]{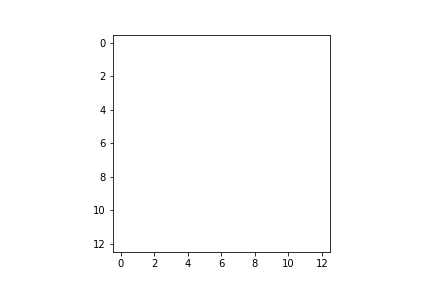}} &
         \raisebox{-.25\height}{\includegraphics[width=0.13\linewidth]{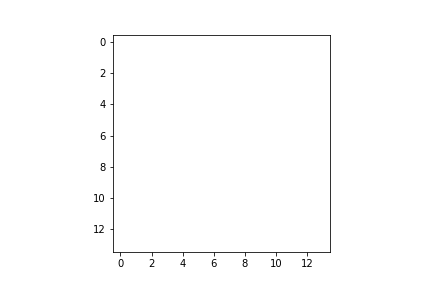}} &
         \raisebox{-.25\height}{\includegraphics[width=0.13\linewidth]{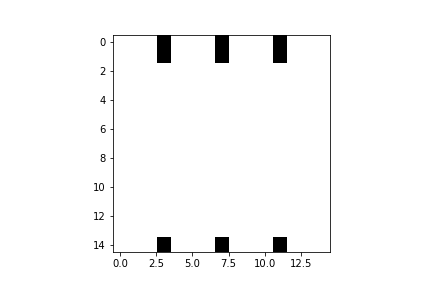}}\\[-4pt]
         \scalebox{.75}{
         \shortstack[l]{
        $(T_1\!  +\!  T_2)
        +\sum_{i\in I} (T_{O(i)}\!  +\!  T_{E(i)})$\\
         $I=\{3,5\},\{3,6\},\{5,6\},\{4,7\}$
         }}
         &
         &
         &
         &
         &
         \raisebox{-.25\height}{\includegraphics[width=0.13\linewidth]{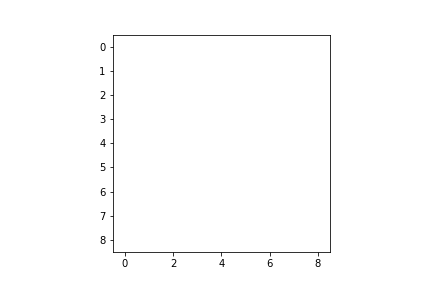} }&
         \raisebox{-.25\height}{\includegraphics[width=0.13\linewidth]{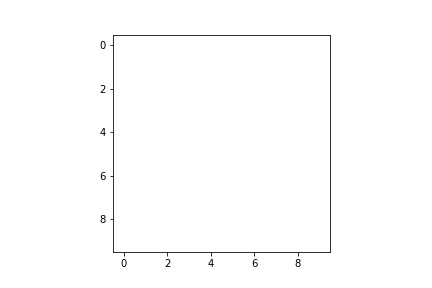}} &
         \raisebox{-.25\height}{\includegraphics[width=0.13\linewidth]{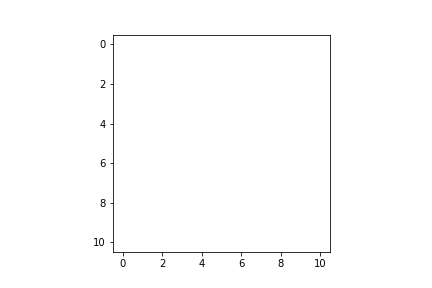}} &
         \raisebox{-.25\height}{\includegraphics[width=0.13\linewidth]{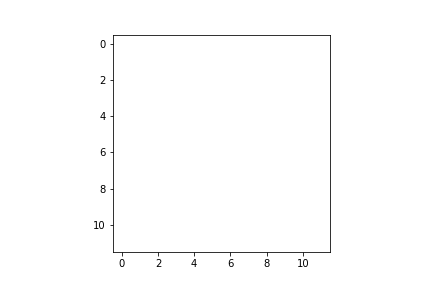}} &
         \raisebox{-.25\height}{\includegraphics[width=0.13\linewidth]{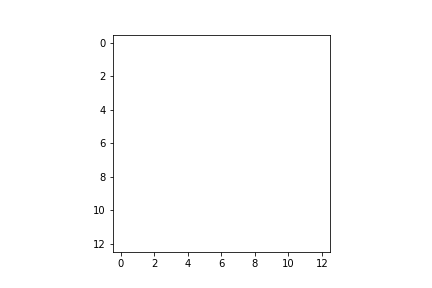}} &
         \raisebox{-.25\height}{\includegraphics[width=0.13\linewidth]{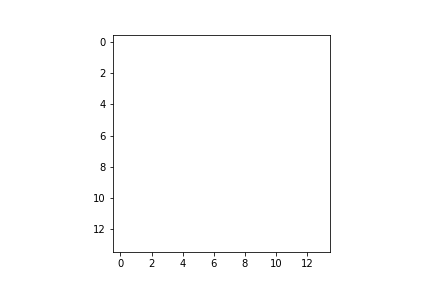}} &
         \raisebox{-.25\height}{\includegraphics[width=0.13\linewidth]{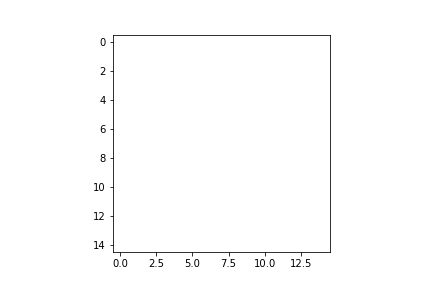}}\\[-4pt]
    \end{tabular}
    \egroup
    \caption{
    Valid (white) and invalid (black) transitions between pairs of states, 
    as defined in Theorem~\ref{theorem:TrotterizedMixerHamiltonian} for Trotterized mixer Hamiltonians.
    The first row shows that for 
    $T_1=T_{O(1),c}$ and $T_2=T_{E(1)}$, the mixer
    $U=e^{-i\beta T_1}e^{-i\beta T_2}$ does not provide transitions between all pairs of feasible states, although $U=e^{-i\beta (T_1+T_2)}$ does.
    }
    \label{fig:TOE_trotter}
\end{figure}

\subsubsection{Products of mixers and \texorpdfstring{$T_E,T_O$}{}}\label{sec:Toddeven}
In some cases, it will be necessary to use Theorem~\ref{theorem:TrotterizedMixerHamiltonian} to implement mixer unitaries. When splitting transitions matrices into odd and even entries the following definition is useful. \replaced{Denote}{Let} the matrix with entries in the d-th off-diagonal for even rows equal to one
\begin{equation}
    T_{E(d)}, \quad \text{ with } ( T_{E(d)})_{i,j}=\begin{cases}
    1, & \ \text{if } i=j+d \vee i=j-d, \text{ and } i \text{ even},\\
    0, & \ \text{else},
    \end{cases}
\end{equation}
and accordingly $T_{O(d)}$ for odd rows. In addition, we will \replaced{use}{need} $T_{O(1),c}$ to be the cyclic version in the same way as in Equation~\eqref{eq:Tnearestint}.
As an example, this allows \added{one} to decompose $T_{\Delta,c}=T_1+T_2\in\R^{n\times n}$ with $T_1=T_{O(1)}+T_{O(n-1)}=T_{O(1),c}$ and $T_2=T_{E(1)}$.

\subsubsection{Random mixer \texorpdfstring{$T_\text{rand}$}{}}
Finally, the upper triangular entries of the mixer $T_\text{rand}$ are drawn from a continuous uniform distribution on the interval $[0,1]$, and the lower triangular entries are chosen such that $T$ becomes symmetric.
Since the probability of getting a zero entry is zero, such a random mixer fulfilles Theorem~\ref{theorem:MixerHamiltonian} with probability 1.

\subsection{Decomposition of (constraint) mixers into basis gates}\label{sec:decomp}
Given a set of feasible (computational basis) states $B=\left\{\ket{x_j}, \ \ j\in J, \ \ x_j\in\{0,1\}^n \right\}$, we can use Theorem~\ref{theorem:MixerHamiltonian} to define a suitable mixer Hamiltonian.
The next question is how to (efficiently) decompose the resulting mixer into basis gates.
In order to do so we first decompose the Hamiltonian $H_M$ into a weighted sum of Pauli-strings.
A Pauli-string $P$ is a Hermitian operator of the form $P=P_1\otimes \cdots \otimes P_n$ where $P_i\in\{I,X,Y,Z\}$. Pauli-strings form a basis of the \textit{real} vector space of all $n$-qubit Hermitian operators.
%
Therefore, we can write
\begin{equation}\label{eq:Paulidecomp}
    H_M = \sum_{i_1,\cdots,i_n=1}^4 c_{i_1,\cdots,i_n} \ \ \sigma_{i_1}\otimes \cdots \otimes \sigma_{i_n}, \quad  c_{i_1,\cdots,i_n}\in\R,
\end{equation}
with real coefficients $c_{i_1,\cdots,i_n}$, 
where $\sigma_1=I, \sigma_2=X, \sigma_3=Y, \sigma_4=Z$.
After using a standard Trotterization scheme\added{\cite{hatano2005finding,trotter1959product}} (which is exact for commuting Paul-strings),
\begin{equation}
    U_M(t) = e^{-it H_M} \approx \prod_{\underset{|c_{i_1,\cdots,i_n|>0}}{i_1,\cdots,i_n=1}}^4 e^{-it c_{i_1,\cdots,i_n} \ \sigma_{i_1}\otimes \cdots \otimes \sigma_{i_n}},
    \label{eq:Trotterization}
\end{equation}
it is well-established how to implement each of the terms of the product using basis gates, see Equation~\eqref{eq:eP}.
We will discuss the effects of Trotterization in more detail in Section~\ref{sec:Trotterizations}, as there are several important aspects to consider for a valid mixer.
\begin{equation}
    e^{-i t P} = 
    \begin{quantikz}[row sep={0.5cm,between origins},column sep=1ex]
    \qw&\gate[wires=7]{U}
    & \ctrl{1} & \qw & \qw &\qw& \qw& \qw& \qw & \qw & \qw &\qw& \qw& \qw& \qw& \ctrl{1} & \gate[wires=7]{U^\dagger}& \qw\\
    \qw&\qw& \targ{} &\ctrl{1} & \qw & \qw& \qw& \qw& \qw& \qw & \qw & \qw& \qw& \qw& \ctrl{1} &\targ{}&\qw& \qw\\
    \qw& \qw& \qw \qw & \targ{} & \qw & \qw & \qw& \qw& \qw& \qw & \qw & \qw& \qw& \qw&\targ{} & \qw & \qw & \qw\\
    &\qwbundle[alternate]{}&\qwbundle[alternate]{} &&\ddots&&&&&&&\adots&&&&&\qwbundle[alternate]{}&\qwbundle[alternate]{}\\
    \qw&\qw& \qw & \qw & \qw & \qw & \ctrl{1} & \qw & \qw & \qw & \ctrl{1}& \qw & \qw & \qw & \qw & \qw & \qw& \qw \\
    \qw&\qw& \qw & \qw & \qw & \qw & \targ{} & \ctrl{1} & \qw& \ctrl{1} &\targ{} & \qw & \qw & \qw & \qw & \qw & \qw& \qw \\
    \qw&\qw& \qw & \qw & \qw & \qw & \qw & \targ{} & \gate{R_z(2t)} & \targ{}& \qw & \qw & \qw & \qw & \qw & \qw & \qw & \qw \\
    \end{quantikz},
    \label{eq:eP}
    \end{equation}
    \begin{equation*}
    U_i=\begin{cases}
    H,& \text{ if } P_i=X,\\
    SH,& \text{ if } P_i=Y,\\
    I,& \text{ if } P_i=Z,
\end{cases}
\qquad 
    (U^\dagger)_i=\begin{cases}
    H,& \text{ if } P_i=X,\\
    HS^\dagger,& \text{ if } P_i=Y,\\
    I,& \text{ if } P_i=Z.
\end{cases}
\end{equation*}
\added{Here, $S$ is the S or Phase gate and $H$ is the Hadamard gate.}
The standard way to compute the coefficients $c_{i_1,\cdots,i_n}$ is given in Algorithm~\ref{alg:PauliDecompStandard}.
\begin{algorithm}[ht]
\KwData{Feasible states $B$ and transition matrix T fulfilling Theorem~\ref{theorem:MixerHamiltonian}}
\KwResult{Coefficients $c_{i_1,\cdots,i_n}$ of Pauli-strings, Equation~\eqref{eq:Paulidecomp}}
    \For{$i_1=1,\ldots,4$}{
    
        $\ddots$
        
\bgroup
\parindent 5pt
\leftskip 5pt 
\SetInd{17pt}{17pt}
            \For{$i_n=1,\ldots,4$}{
                $
                    c_{i_1,\cdots,i_n} = \frac{1}{2^n} \Trace{ \sigma_{i_1}\otimes \cdots \otimes \sigma_{i_n} H_M }
                $
            }
\egroup
        
        \reflectbox{$\ddots$}
        
    }
\caption{Decompose $H_M$ given by Equation~\eqref{eq:Hmdefinition} into Pauli-strings via trace}\label{alg:PauliDecompStandard}
\end{algorithm}
%
%
For $n$ qubits this requires to compute $4^n$ coefficients, as well as multiplication of $2^n\times 2^n$ matrices.
However, most of these terms are expected to vanish.
We therefore describe an alternative way to produce this decomposition, using the language of quantum \replaced{mechanics~\cite{sakurai2006advanced}}{chemistry}.
In the following we use the \textit{ladder operators}
used in the 
creation and annihilation operators 
from the second quantization formulation in quantum chemistry defined by
\begin{equation}
    a^\dagger = \frac{1}{2} (X-iY), 
    a = \frac{1}{2} (X+iY).
\end{equation}
Since 
$a\ket{0} = 0, a\ket{1} = \ket{0}$,
where $0$ is the zero vector,
we have that $\ket{0}\bra{1}=a$.
Since
$a^\dagger\ket{0} = \ket{1}, a^\dagger\ket{1} = 0$,
we have that $\ket{1}\bra{0}=a^\dagger$,
and finally 
$    a^\dagger a \ket{0} = 0, a^\dagger a \ket{1} = \ket{1}, 
    a a^\dagger \ket{0} = \ket{0}, a a^\dagger \ket{1} = 0,
$
means that  $\ket{0}\bra{0}=aa^\dagger$ and  $\ket{1}\bra{1}=a^\dagger a$.
Note that
\begin{equation}
    a^\dagger a = \frac{1}{2}\left(I-Z\right), \quad 
    a a^\dagger = \frac{1}{2}\left(I+Z\right).
\end{equation}


    
                

As an example, consider the matrix $M = \ket{01}\bra{10}=\ket{0}\bra{1}\otimes\ket{1}\bra{0}$ which can be expressed with ladder
operators as $M= a_1 a_2^\dagger$.
Another example is given by $M = \ket{01}\bra{11} = a_1 a_2^\dagger a_2$.
This approach clearly extends to the general case and leads to \deleted{the} Algorithm~\ref{alg:PauliDecompSecondQ}.
\begin{algorithm}[ht]
\KwData{Feasible states $B$ and transition matrix T fulfilling Theorem~\ref{theorem:MixerHamiltonian}}
\KwResult{Non-vanishing coefficients $c_{i_1,\cdots,i_n}$ of Pauli-strings, Equation~\eqref{eq:Paulidecomp}}
    S=0
    
    \For{{\normalfont \textbf{all}} $1\leq j,k\leq |J|$, s.t.  $(T)_{j,k}\neq 0$}{
                Define $x_L=x_{J_j}, x_R=x_{J_k}$
                
                \For{$1\leq i \leq n$}{
                    $(P_{j,k})_i = \begin{cases}
                        \frac{1}{2} (X+iY),& \text{if } (x_L)_i=0 \wedge (x_R)_i=1,\\
                        \frac{1}{2} (X-iY),& \text{if } (x_L)_i=1 \wedge (x_R)_i=0,\\
                        \frac{1}{2}\left(I+Z\right),& \text{if } (x_L)_i=0 \wedge (x_R)_i=0,\\
                        \frac{1}{2}\left(I-Z\right),& \text{if } (x_L)_i=1 \wedge (x_R)_i=1.\\
                    \end{cases}$
                }
                $ S = S + (T)_{j,k} P_{j,k} $
            }
    simplify S (e.g., using a library for symbolic mathematics)
    
    this defines the non-vanishing coefficients $c_{i_1,\cdots,i_n}$
\caption{Decompose $H_M$ given by Equation~\eqref{eq:Hmdefinition} into Pauli-strings directly}\label{alg:PauliDecompSecondQ}
\end{algorithm}

A comparison of the complexity of the two algorithms is given in Table~\ref{tab:complexity}.
The naive algorithm needs to perform a matrix-matrix\replaced{ }{-}multiplication with matrices of \deleted{the} size $2^n\times2^n$ for each of the $4^n$ coefficients.
This quickly becomes prohibitive for larger $n$.
The algorithm based on ladder
operators requires resources that scale with the number of non-zero entries of the transition matrix $T$, which is much more favourable.
In the end a symbolic mathematics library is used to simplify the expressions in order to create the list of non-zero Pauli-strings.

\subsection{Optimality of mixers}\label{sec:optimality}
On current NISQ devices, the noise level of two-qubit gate (CX) times and error rates are one order of magnitude higher than for single qubit gates ($U_3$). In addition, most devices lack all-to-all connectivity.
CX gates between these require SWAP operations, which consist of  additional CX gates.
\textit{An optimal mixer will therefore contain as few CX gates as possible.}
Since Pauli-strings are implemented according to Equation~\eqref{eq:eP} we define the cost to implement $e^{-i t H_M}$ as
\begin{equation}
    \operatorname{Cost}(H_M) = \sum_{\substack{i_1,\cdots,i_n=1\\
    |c_{i_1,\cdots,i_n}|> 0\\
    \operatorname{len}(\sigma_{i_1}\otimes \cdots \otimes \sigma_{i_n})>1
    }}^4
    2 (\operatorname{len}(\sigma_{i_1}\otimes \cdots \otimes \sigma_{i_n})-1),
    \label{eq:optimality}
\end{equation}
where \replaced{$\operatorname{len}(P)$ is }{ we defined } the \textit{length of a Pauli-string P} 
\replaced{defined as }{to be} the number of literals that are not the identity.
For instance $P=IXIIY=I_1 X_2 I_3 I_4 Y_5=X_2 Y_5$ has $\operatorname{len}(P)=2$.
The $\operatorname{Cost}(H_M)$ specifies the number of CX gates that are required to implement the mixer. A lower cost means fewer and/or shorter Pauli-strings.
There are four interconnected factors that influence the cost to implement the mixer for a given $B$.

\subsubsection{Transition matrix \texorpdfstring{$T$}{T}}
The larger $|B|$ the more freedom we have in choosing the transition matrix $T$ that fulfills Theorem~\ref{theorem:MixerHamiltonian}. The combination of $T$ and the specific states of $B$ define the cost of the Hamiltonian. Unless one can find a way to utilize\deleted{s} the structure of the states of $B$ to efficiently compute an optimal $T$, we expect this problem to be NP-hard. In practice, a careful analysis of the specific states of $B$ is required to determine $T$ such that the cost becomes low.
We will revisit optimality for both unrestricted and restricted mixers in \replaced{ Sections~\ref{sec:fullmixer} and \ref{sec:constrainedmixers}.}{with}

\subsubsection{Adding mixers}
Corollary~\ref{corollary:properties} allows \added{one} to add mixers with a kernel that contains $\Span{B}$. In general, also this is a combinatorial optimization problem which we do not expect to solve exactly with an efficient algorithm. However, we will provide a heuristic that can be used to reduce the cost of mixers in certain cases. We will provide more details in Section~\ref{sec:constrainedmixers} where we discuss constrained mixers on some examples in detail.

\subsubsection{Non-commuting Pauli-strings}
Depending on the mixer -- which depends on the transition matrix and addition of mixers outside the feasible subspace -- one can influence the commutativity pattern of the resulting Pauli-strings. This is an intricate topic, which we discuss next.

\subsection{Trotterizations}\label{sec:Trotterizations}
Algorithms~\ref{alg:PauliDecompStandard} and \ref{alg:PauliDecompSecondQ} produce a weighted sum of Pauli-strings equal to the mixer Hamiltonian $H_M$ defined in Theorem~\ref{theorem:MixerHamiltonian}.
A further complication arises when the non-vanishing Pauli-strings of the mixer Hamiltonian $H_M$ do not all commute.
In that case one can not realize $U_M$ exactly but has to find a suitable approximation/Trotterization, see Equation~\eqref{eq:Trotterization}.
Two Pauli-strings \textit{commute}, i.e., $[P_A,P_B]=P_AP_B-P_BP_A=0$ if\added{,} and only if\added{,} they \textit{fail} to commute on an \textit{even} number of indices~\cite{gokhale2019minimizing}.
An example is given in Figure~\ref{fig:XYn3}.

This problem is similar to a problem for observables; how does one divide the Pauli-strings into groups of commuting families~\cite{gokhale2019minimizing,gui2020term} \deleted{such as} to maximize efficiency and increase accuracy?
In order to minimize the number of measurements required to estimate a given observable one wants to find a ``min-commuting-partition"; given a set of Pauli-strings from
a Hamiltonian, one seeks to partition the strings into commuting families such that the total number of partitions is minimized. This problem is NP-hard in general~\cite{gokhale2019minimizing}.
However, based on Theorem~\ref{theorem:P_Tij_commutes} we expect our problem to be much more tractable.

For our case, it turns out that \textbf{not all Trotterizations are suitable as mixing operators}; they can either fail to preserve the feasible subspace, i.e., Equation~\eqref{eq:mixer_preserve}, or fail to provide transitions between all pairs of feasible states, i.e., Equation~\eqref{eq:mixer_transition}.
An example is given by $B=\{\ket{001}, \ket{010}, \ket{100}\}$ with the mixer $H_M=\frac{1}{2} (XIX+YIY)+ \frac{1}{2} (XXI+YYI)$ associated with $T_{\Delta} = T_{1\leftrightarrow 2}+T_{2\leftrightarrow 3}$, see Section~\ref{sec:XYmixer}. Looking at Figure~\ref{fig:XYn3}, these terms can be grouped into commuting families in two ways, which represent two (of many) different ways to realize the mixer unitary with basis gates.
\begin{enumerate}
    \item The first possible Trotterization is given by $U_1(\beta)=e^{-i\beta(XXI+IXX)}$ and $U_2(\beta)=\allowbreak e^{-i\beta(YYI+IYY)}$.
    However, it turns out that $\exists \beta\in\R$ such that $|\bra{111}\allowbreak U_1(\beta)\allowbreak U_2(\beta)\allowbreak \ket{z}|>0$ for all $\ket{z}\in B$.
    This means that this Trotterization does \emph{not preserve the feasible subspace} and does not represent a valid mixer Hamiltonian.
    The underlying reason for this is that the terms $XXI$ and $YYI$ are generated from the entry $T_{1\leftrightarrow 2}$, but are split in this Trotterization. The same holds true for $IXX$ and $IYY$ which are generated via $T_{2\leftrightarrow 3}$.
    
    \item The second possible Trotterization is given by $U_1(\beta)=e^{-i\beta(XIX+YIY)}$ and $U_2(\beta)=e^{-i\beta(XXI+YYI)}$,
    which splits terms with respect to $T_{1\leftrightarrow 2}$ and $T_{2\leftrightarrow 3}$
    In this case, we have that $|\bra{100}\allowbreak U_1(\beta)\allowbreak U_2(\beta)\allowbreak \ket{001}|=0$, so it does \emph{not provide an overlap} between all feasible computational basis states.
    This can be understood via Theorem~\ref{theorem:TrotterizedMixerHamiltonian}.
    We have that $(T_{1\leftrightarrow 2}^{n_1}T_{2\leftrightarrow 3}^{n_2})_{3,1}=0$ for all $n_1,n_2\in\N$, so one can not ``reach" $\ket{100}$ from $\ket{001}$.
    The opposite is \emph{not} true; we have that $(T_{1\leftrightarrow 2}T_{2\leftrightarrow 3})_{1,3}=1$, so $\exists \beta$ such that 
    $|\bra{001}\allowbreak U_1(\beta)\allowbreak U_2(\beta)\allowbreak \ket{100}|>0$.
\end{enumerate}

We have just learned that it is a bad idea to Trotterize terms that belong to a non-zero entry of T, i.e. to $T_{j\leftrightarrow i}$.
Therefore, we need to \replaced{show that}{exclude that there are cases where not} all non-vanishing Pauli-strings of $\ket{x_j}\bra{x_i} + \ket{x_i}\bra{x_j} $ commute\replaced{; otherwise there might exist subspaces for which we can not realize the mixer constructed in Theorem~\ref{theorem:MixerHamiltonian}}{. If that were the case, it might not be possible to realize a mixer for a certain feasible subspace}.
Luckily, the following Theorem shows that it is always possible to realize a mixer by Trotterizing according to non-zero entries of $T=\sum_{i,j\in J, i<j} T_{j\leftrightarrow i}$.
\begin{theorem}[Pauli strings for $T_{j\leftrightarrow i}$ commute]\label{theorem:P_Tij_commutes}
Let $\ket{z}, \ket{w}$ be two computational basis states in $\C^{2^n}$.
Then all non-vanishing Pauli-strings $\sigma_{i_1}\otimes \cdots \otimes \sigma_{i_n}$ of the decomposition
\begin{equation}
    \ket{z}\bra{w} + \ket{w}\bra{z} = \sum_{i_1,\cdots,i_n=1}^4 c_{i_1,\cdots,i_n} \ \ \sigma_{i_1}\otimes \cdots \otimes \sigma_{i_n}, \quad  c_{i_1,\cdots,i_n}\in\R,
\end{equation}
commute.
\end{theorem}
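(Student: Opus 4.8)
The plan is to pin down the \emph{support structure} of the Pauli strings that can occur in the decomposition, and then invoke the even-overlap commutation criterion quoted above.

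First I would fix notation: let $D\subseteq\{1,\dots,n\}$ be the set of indices where $z$ and $w$ differ. Writing $\ket{z}\bra{w}=\bigotimes_{i=1}^n \ket{z_i}\bra{w_i}$ and using the ladder-operator identities established earlier, the $i$-th factor equals $\tfrac12(I\pm Z)$ when $i\notin D$ (there $z_i=w_i$, so the factor is $aa^\dagger$ or $a^\dagger a$), and equals $a$ or $a^\dagger$, i.e.\ $\tfrac12(X\pm iY)$, when $i\in D$. Expanding the tensor product shows that every Pauli string occurring in $\ket{z}\bra{w}$ — and, since $D$ is the same for $\ket{w}\bra{z}$, also in $\ket{z}\bra{w}+\ket{w}\bra{z}$ — carries a factor from $\{I,Z\}$ on each index outside $D$ and a factor from $\{X,Y\}$ on each index in $D$.

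Next comes the key parity observation. The matrix $\ket{z}\bra{w}+\ket{w}\bra{z}$ has only real entries in the computational basis, while a tensor product of Paulis is a real matrix exactly when it contains an \emph{even} number of $Y$ factors (and a purely imaginary matrix when that number is odd). Since the Pauli coefficients are real — concretely $c_P=2^{-n}\operatorname{tr}\!\big(P(\ket{z}\bra{w}+\ket{w}\bra{z})\big)$ — any string with an odd number of $Y$'s is forced to have vanishing coefficient. Combined with the previous paragraph, every non-vanishing string $P$ is characterized on $D$ by its ``$Y$-set'' $S_P:=\{i\in D : P_i=Y\}$, and $\lvert S_P\rvert$ is even. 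I would then conclude via the commutation criterion: for two non-vanishing strings $P,Q$, on indices outside $D$ both factors lie in $\{I,Z\}$ and hence commute, while on $D$ the two fail to commute precisely at the indices where one is $X$ and the other $Y$, i.e.\ on $S_P\,\triangle\,S_Q$; because $\lvert S_P\,\triangle\,S_Q\rvert=\lvert S_P\rvert+\lvert S_Q\rvert-2\lvert S_P\cap S_Q\rvert$ is even, $P$ and $Q$ fail to commute on an even number of indices, so $[P,Q]=0$ by the criterion of~\cite{gokhale2019minimizing}.

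The step I expect to need the most care is the parity claim. The ``realness of the matrix'' argument is short but hinges on two facts that I would state carefully: that a tensor product of Paulis is real- or imaginary-valued according to the parity of its number of $Y$'s, and that real- and imaginary-valued Pauli strings span complementary real subspaces, so no cancellation between the two types can occur. A more pedestrian but fully explicit alternative is to compute the coefficient of the string with $Y$-set $S$ directly as $2^{-\lvert D\rvert}\, i^{\lvert S\rvert}\big[(-1)^{\lvert S\cap D_{10}\rvert}+(-1)^{\lvert S\cap D_{01}\rvert}\big]$, where $D_{01},D_{10}$ partition $D$ according to the direction of the bit flip, and observe that this vanishes unless $\lvert S\rvert=\lvert S\cap D_{01}\rvert+\lvert S\cap D_{10}\rvert$ is even; this avoids any appeal to spanning arguments at the cost of a short computation.
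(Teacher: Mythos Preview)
Your proof is correct and takes a genuinely different route from the paper's. The paper proceeds by induction on $n$: it strengthens the claim to include also $B_n=i(\ket{z}\bra{w}-\ket{w}\bra{z})$, shows via an explicit recursion (Equation~\eqref{eq:recursionPauli}) that all Pauli strings of $A_{n+1}$ commute with each other, all of $B_{n+1}$ commute with each other, and strings of $A_{n+1}$ anticommute with those of $B_{n+1}$, and verifies the induction step by case analysis on the last bit. Your argument instead isolates the structural invariant directly --- every non-vanishing string has $\{I,Z\}$ outside $D$, $\{X,Y\}$ on $D$, and an even number of $Y$'s --- and then reads off commutativity from the symmetric-difference parity. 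Your approach is shorter and more transparent about \emph{why} the strings commute; the paper's inductive approach is more mechanical but has the side benefit of yielding the recursive decomposition formula that drives Algorithm~\ref{alg:PauliDecompRecursive}, and the auxiliary anticommutation statement about $B_n$ (which you do not need) falls out for free. Your ``pedestrian'' alternative for the parity step is also fine; note only that the full coefficient carries an additional $\pm 2^{-(n-|D|)}$ factor from the $\{I,Z\}$ positions outside $D$, which does not affect the vanishing argument.
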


\begin{proof}
We will \replaced{prove}{proof} the following more general assertion by induction.
Let $P_{1+,1},P_{1+,2}$ \added{be} two non-vanishing Pauli-strings of the decomposition of $\ket{z}\bra{w} + \ket{w}\bra{z}$, and 
$P_{i-,1},P_{i-,2}$ \added{be} two non-vanishing Pauli-strings of the decomposition of $i(\ket{z}\bra{w} - \ket{w}\bra{z})$. Then $[P_{1+,1},\allowbreak P_{1+,2}]=0$, $[P_{i-,1},P_{i-,2}]=0$ and $[P_{1+,\cdot},P_{i-,\cdot}]\neq0$.
We will use that two Pauli-strings commute if\added{,} and only if\added{,} they \textit{fail} to commute on an \textit{even} number of indices~\cite{gokhale2019minimizing}.

\noindent
\ul{For $n=1$} we have the following cases.
\begin{equation}
        A_1={\scriptstyle \ket{z}\bra{w} + \ket{w}\bra{z}}=\begin{cases}
        I+Z,& \! \! \! \! \text{if } \scriptstyle(z,w)=(0,0),\\
        X,& \! \! \! \!  \text{if } \scriptstyle(z,w)=(0,1),\\
        X,& \! \! \! \!  \text{if } \scriptstyle(z,w)=(1,0),\\
        I-Z,& \! \! \! \!  \text{if } \scriptstyle(z,w)=(1,1),\\
    \end{cases}
    \ \
        B_1={\scriptstyle i(\ket{z}\bra{w} - \ket{w}\bra{z})}=
        \begin{cases}
        0,& \! \! \! \!  \text{if} \scriptstyle(z,w)=(0,0),\\
        -Y,& \! \! \! \!  \text{if} \scriptstyle(z,w)=(0,1),\\
        +Y,& \! \! \! \!  \text{if} \scriptstyle(z,w)=(1,0),\\
        0,& \! \! \! \!  \text{if} \scriptstyle(z,w)=(1,1).\\
    \end{cases}
\end{equation}
It is trivially true that $[P_{1+,1},P_{1+,2}]=0$\replaced{ and}{,} $[P_{i-,1},P_{i-,2}]=0$ 
since the maximum number of Pauli-strings is two, and in that case one of the Pauli-strings is the identity.
Moreover, $P_{i-,\cdot}$ is nonzero only when $z\neq w$. In that case $[P_{1+,\cdot},P_{i-,\cdot}]=[X,\pm Y]\neq0$.

\noindent
\ul{$n\rightarrow n+1$.} The assumptions hold for $A_n= (\ket{z}\bra{w} + \ket{w}\bra{z}), \allowbreak B_n=i(\ket{z}\bra{w} - \ket{w}\bra{z})$. 
We have the following four cases
\begin{equation}
    \begin{split}
        A_{n+1}=\ket{zx}\bra{wy} + \ket{wy}\bra{zx}=&\frac{1}{2}\begin{cases}
        A_n\otimes(I+Z),& \text{if } (x,y)=(0,0),\\
        A_n\otimes X +B_n\otimes Y,& \text{if } (x,y)=(0,1),\\
        A_n\otimes X -B_n\otimes Y,& \text{if } (x,y)=(1,0),\\
        A_n\otimes(I-Z),& \text{if } (x,y)=(1,1),\\
    \end{cases}
    \\
        B_{n+1}=i(\ket{zx}\bra{wy} - \ket{wy}\bra{zx})=&
        \frac{1}{2}\begin{cases}
        B_n\otimes(I+Z),& \text{if } (x,y)=(0,0),\\
        B_n\otimes X-A_n\otimes Y,& \text{if } (x,y)=(0,1),\\
        B_n\otimes X+A_n\otimes Y,& \text{if } (x,y)=(1,0),\\
        B_n\otimes(I-Z),& \text{if } (x,y)=(1,1).\\
    \end{cases}
    \end{split}
    \label{eq:recursionPauli}
\end{equation}

\noindent
\textit{Case $x=y$.}
The number of indices where two Pauli-strings commute does not change when going from $A_n$ to $A_n\otimes(I\pm Z)$. The same holds for $B_n$ and $B_n\otimes(I\pm Z)$.
This means that the assertion is true for $x=y$.

\noindent
\textit{Case $x\neq y$.}
First, we prove that all non-vanishing Pauli-strings of $A_{n+1}$ commute, and the same for $B_{n+1}$.
This is easy to see, since non-vanishing Pauli strings of  $A_n\otimes X \pm B_n \otimes Y$ must have an even number of indices where they fail to commute. The same is true for $B_n\otimes X \pm A_n \otimes Y$.
Finally, we prove that non-vanishing Pauli-strings of $A_{n+1}$ do not commute with non-vanishing Pauli-strings of $B_{n+1}$.
Using our assumptions of non-vanishing Pauli strings of $A_n$ and $B_n$ it is easy to show that non-vanishing Pauli-strings of
the following pairs fail to commute on an \textit{odd} number of indices
$(A_n\otimes X,B_n\otimes X)$, 
$(A_n\otimes X,\pm A_n\otimes Y)$, 
$(\pm B_n\otimes Y,B_n\otimes X)$, and hence do not commute.
This shows that also non-vanishing Pauli strings of $A_{n+1}$ and $B_{n+1}$ do not commute.
\end{proof}

The proof in Theorem~\ref{theorem:P_Tij_commutes} inspires the following algorithm to decompose $H_M$ into Pauli-strings.
For each item in the list S that the algorithm produces, all Pauli-strings commute.
\begin{algorithm}[ht]
\KwData{Feasible states $B$, transition matrix T (zero diagonal) fulfilling Theorem~\ref{theorem:MixerHamiltonian}
}
\KwResult{Non-vanishing coefficients $c_{i_1,\cdots,i_n}$ of Pauli-strings, Equation~\eqref{eq:Paulidecomp}}
    S=[ ]
    
    \For{{\normalfont \textbf{all}} $1\leq j<k\leq |J|$, s.t.  $(T)_{j,k}\neq 0$}{
        Recursively compute $A_n$ and $B_n$ for $\ket{w}=\ket{x_{J_j}}$ and $\ket{z}=x_{J_k}$ via Equation~\eqref{eq:recursionPauli}
        
        Append $(T)_{j,k} A_n$ to S
    }
    
    check for cancellations and return S
\caption{Decompose $H_M$ given by Equation~\eqref{eq:Hmdefinition} into Pauli-strings directly}\label{alg:PauliDecompRecursive}
\end{algorithm}

We can illustrate the difference between Algorithms~\ref{alg:PauliDecompSecondQ} and~\ref{alg:PauliDecompRecursive} for $B=\{\ket{01},\ket{10}\}$ and $T_{1\leftrightarrow 2}$.
With Algorithm~\ref{alg:PauliDecompSecondQ} we have $P_{1,2}=\frac{1}{4}(X-iY)(X+iY)$ and $P_{2,1}=\frac{1}{4}(X+iY)(X-iY)$ which can be simplified to $S=P_{1,2}+P_{2,1}=\frac{1}{2}(XX+YY)$.
With Algorithm~\ref{alg:PauliDecompRecursive} we have $A_1=X, B_1=Y$ and $S=A_2=\frac{1}{2}(A_1X+B_1Y)=\frac{1}{2}(XX+YY)$ without the need to simplify the expression.

\begin{table}[t]
    \centering
        \begin{tabular}{lrrr}
        \toprule
        &Algorithm~\ref{alg:PauliDecompStandard} &Algorithm~\ref{alg:PauliDecompSecondQ}&Algorithm~\ref{alg:PauliDecompRecursive}\\
        \midrule
        runtime & $\mathcal{O}( 2^{5n} )$ & $\mathcal{O}(n \gamma)$ & $\mathcal{O}(n \gamma)$ \\
        memory & $\mathcal{O}(2^{2n})$ & $\mathcal{O}(n \gamma)$ & $\mathcal{O}(n \gamma)$ \\
        \bottomrule
        \end{tabular}
    \caption{Comparison of the complexity of the two algorithms for $n$ qubits.
    Here, $\gamma$ is the number of nonzero entries of $T$.
    }
    \label{tab:complexity}
\end{table}


As shown above, Trotterizations can also lead to missing transitions.
It is suggested in \cite{hadfield2019quantum} that it is useful to repeat mixers within one mixing step, which corresponds to $r>1$ in Equation~\eqref{eq:mixer_transition}.
However, as we see in Figure~\ref{fig:TOE_trotter}, there can be more efficient ways to get mixers which provide transitions between all pairs of feasible states. 
One way to do so is to construct an exact Trotterization (restricted to the feasible subspace) as described in \cite{Wang2020}.
However, the ultimate goal is \replaced{not}{\textbf{not}} to avoid Trotter\added{ization} errors, but rather to provide transitions between all pairs of feasible states.
We will revisit the topic of Trotterizations in Section~\ref{sec:constrainedmixers} in more detail for each case and show that there are more efficient ways to do so.

\section{Full/Unrestricted mixer}\label{sec:fullmixer}
We start by applying the proposed algorithm to the case without constraints, i.e., for the case $g=0$ in Equation~\eqref{eq:QCBO}, in order to check for consistency and new insight.
We will see that the presented approach is able to reproduce the ``standard" $X$ mixer as one possibility, but provides a more general framework.
For this case $B=\{\ket*{x_j}, j\in J, x_j\in\{0,1\}^n\}, \ J=\{i, 1\leq i \leq 2^n\}$ which means that $\Span{B} = \mathcal{H}$. Furthermore, using Equation~\eqref{eq:H_ETET} we have that $H_{M,B} = T$, since $E$ is the identity.

\subsection{\texorpdfstring{$T_{\Ham(1)}$}{} aka ``standard" full mixer}
The Hamiltonian of the standard full mixer for n qubits can be written as
\begin{equation}
    \begin{split}
        H_M &= \sum_{j\in J} X_j \\
        &=\sum_{j\in J} \left(\ket{0}\bra{0} + \ket{1}\bra{1}\right)^{\otimes (j-1)}
        \otimes
        (\ket{0}\bra{1} + \ket{1}\bra{0})
        \otimes \left(\ket{0}\bra{0} + \ket{1}\bra{1}\right)^{\otimes (n-j)}\\
        &= \sum_{j,k\in J} \left(T_{\Ham(1)}\right)_{j,k}\ket*{x_j}\bra*{x_k}.
    \end{split}
    \label{eq:H_M_full_standard}
\end{equation}
The last identity in Equation~\eqref{eq:H_M_full_standard} shows that $H_M$ is created by the transition matrix given by $T_{\Ham(1)}$.
This assumes that the feasible states in $B$ are ordered from the smallest to the largest integer representation.
\noinitial{
According to Section~\ref{sec:T_Ham1} a feasible initial state is given by uniform superposition $\ket{\phi_0} = \frac{1}{\sqrt{2^n}}\sum_{j\in J} \ket{x_j} = \frac{1}{\sqrt{2^n}} ( 1, 1, \cdots , 1)^T = \ket{+}^{\otimes n}$, which is a well established fact.}

\subsection{All-to-all full mixer}
For $|J|=2^n$ the full mixer $T_A$ can be written as 
$ T_A = \sum_{j=1}^n T_{\Ham(j)}$.
For the case $T_{\Ham(2)}$ the resulting Hamiltonian $H_M$ does not provide transitions between all pairs of feasible states, but we observe that
$H_M = \sum_{j,k\in J} (T_{\Ham(2)})_{j,k}\ket*{x_j}\bra*{x_k} =
\sum_{j_1}^n \sum_{j_2=j_1+1}^n X_{j_1} X_{j_2}$, i.e., $H_M$
consists of all {\scriptsize$\begin{pmatrix}
n \\
2
\end{pmatrix}$} possible pairs of Pauli-strings which contain exactly two $X$s.
For $m\leq n$ this can be further generalized to
\begin{equation}
        H_M = \sum_{j,k\in J} \left(T_{\Ham(m)}\right)_{j,k}\ket*{x_j}\bra*{x_k} =
        \sum_{j_1=1}^n \sum_{j_2=j_1+1}^n \cdots \sum_{j_m=j_{m-1}+1}^n X_{j_1} \cdots X_{j_m},
\end{equation}
which consists of all {\scriptsize$\begin{pmatrix}
n \\
m
\end{pmatrix}$} possible pairs of Pauli-strings with exactly \replaced{$m$}{m} $X$s.
%
%
The resulting mixer Hamiltonian is therefore given by
\begin{equation}
    H_M = \sum_{j_1=1}^n \bigg( X_{j_1} + \sum_{j_2=j_1+1}^n \bigg( X_{j_1} X_{j_2} + 
    \ldots +
    \sum_{j_n=j_{n-1}+1}^n X_{j_1} \cdots X_{j_n}\bigg)\bigg).
\end{equation}
This means that the mixer consists of the standard mixer plus 
$\begin{pmatrix}
n \\
k
\end{pmatrix}
$ applications of Pauli $X$-k strings for $k$ from $2$ to $n$, which is a large overhead \replaced{compared}{with respect} to the standard X-mixer. 
%
%
%
%
\noinitial{
According to Section~\ref{sec:T_A} a feasible initial state is given by uniform superposition $\ket{\phi_0} = \frac{1}{\sqrt{2^n}}\sum_{j\in J} \ket{x_j} = \frac{1}{\sqrt{2^n}} ( 1, 1, \cdots , 1)^T = \ket{+}^{\otimes n}$, which is the same as for the standard mixer.
}

\subsection{(Cyclic) nearest integer full mixer}
The resulting mixer for $T_{\Delta}$/$T_{\Delta,c}$ involve exponentially many Pauli-strings with increasing $n$. The following shows  $T_{\Delta}$ for $1\leq n \leq 4$.
\begin{equation}
\begin{split}
    H_M^{n=1} &= X_1,\\
    H_M^{n=2} &= I_1 \otimes H_M^{n=1} + \frac{1}{2} (X_1 X_2 + Y_1 Y_2),\\
    H_M^{n=3} &= I_1 \otimes H_M^{n=2} + \frac{1}{4} (X_1 X_2 X_3 - X_1 Y_2 Y_3 + Y_1 X_2 Y_3 + Y_1 Y_2 X_3 ),\\
    H_M^{n=4} &= I_1 \otimes H_M^{n=3} + \frac{1}{8} (X_1 X_2 X_3 X_4 - X_1 X_2 Y_3 Y_4 - X_1 Y_2 X_3 Y_4 - X_1 Y_2 Y_3 X_4\\&
    \hphantom{I_1 \otimes H_M^{n=3} + \frac{1}{8} (} \ + Y_1 X_2 X_3 Y_4 + Y_1 X_2 Y_3 X_4 + Y_1 Y_2 X_3 X_4 - Y_1 Y_2 Y_3 Y_4).\\
\end{split}
\end{equation}
%
%
\noinitial{
According to Section~\ref{sec:T_Delta} feasible initial states for $T_{\Delta}$ are given by $v_k/\|v_k\|^2$ as defined in Equation~\eqref{eq:eigvecT_delta}.
For $T_{\Delta, c}$ uniform superposition $\ket{\phi_0} = \frac{1}{\sqrt{2^n}}\sum_{j\in J} \ket{x_j} = \frac{1}{\sqrt{2^n}} ( 1, 1, \cdots , 1)^T = \ket{+}^{\otimes n}$ is a feasible initial state.
}

\subsection{Comparison and optimality of full mixers}\label{sec:optimalityfull}

\setlength{\tabcolsep}{3pt}
\begin{table}[t]
    \centering
        \begin{tabular}{l rrrrrr r rrrrrr r rrrrrr}
        \toprule
        n& 1 & 2 & 3 & 4 & 5 & 6&\phantom{a} & 1 & 2 & 3 & 4 & 5 & 6&\phantom{a} & 1 & 2 & 3 & 4 & 5 & 6\\
        \midrule
        &\multicolumn{6}{c}{Ham$(T)$} && \multicolumn{6}{c}{\#$U_3$}&& \multicolumn{6}{c}{\#CX $ = \operatorname{Cost}(H_M)$}\\
        $T_{Ham(1)}$    & 2 & 8 & 24 & 64 & 160 & 384  && 1 & 2 & 3 & 4 & 5 & 6 && 0 & 0 & 0 & 0 & 0 & 0 \\
        $T_A$ & 2 & 16 & 96 & 512 & 2560 & 12288  && 1 & 2 & 3 & 4 & 5 & 6 && 0 & 2 & 10 & 34 & 98 & 258 \\
        $T_{\Delta,c}$ & 2 & 12 & 28 & 60 & 124 & 252 && 1 & 1 & 1 & 1 & 1 & 1 && 0 & 2 & 12 & 44 & 132 & 356 \\
        $T_{\Delta}$ & 2 & 8 & 22 & 52 & 114 & 240  && 1 & 1 & 1 & 1 & 1 & 1 && 0 & 4 & 20 & 68 & 196 & 516 \\
        $T_{rand}$ & 2 & 16 & 96 & 512 & 2560 & 12288 && 2 & 4 & 6 & 8 & 10 & 12 && 0 & 10 & 86 & 552 & 3260 & 17650 \\
        \bottomrule
        \end{tabular}
    \caption{Full/unrestricted mixer case for $n$ qubits, i.e., $|B|=2^n$. Comparison of the total Hamming distance of the transition matrix $T$ as well as resulting requirements for implementations in terms of single- and two-qubit gates for different $T$.
    }
    \label{tab:TotalHammingFull}
\end{table}

It would be convenient to have a condition on the transition matrix for the optimality of the resulting mixer.
We define the total Hamming distance of $T$ to be
\begin{equation}
    \text{Ham}(T) = \sum_{\substack{j,k=1\\
    |(T)_{j,k}|>0
    }}^{|J|} d_\text{Hamming}(\text{bin}(i), \text{bin}(j))\replaced{,}{.}
    \label{eq:HamT}
\end{equation}
\added{where $bin(i)$ is the binary representation of an integer i.}
As a first instinct one might suspect that the mixer with minimal Hamming distance, also minimizes the cost.
%
However, this turns out to be false, because of cancellations when more terms in $T$ are non-zero.
Table~\ref{tab:TotalHammingFull} gives a comparison of the total Hamming distance and cost for different full mixers.
The standard full mixer has total Hamming distance $\text{Ham}(T) = (n-1) 2^n$, as there are $2^n$ states \replaced{each with}{with each} $n-1$ states that have Hamming distance 1.
The all to all full mixer has $\text{Ham}(T) = 2^n \sum_{k=1}^n k{\scriptsize\begin{pmatrix}
n \\
k
\end{pmatrix}}$.
\replaced{For the}{The} rest of the transition matrices \replaced{it}{the $Ham(T)$} is not that straightforward to \replaced{derive a general formula for $Ham(T)$}{calculate}, but the table gives an impression.
Table~\ref{tab:TotalHammingFull} shows a dramatic difference between the different mixers with regards to resource requirements.
The standard mixer is the only one which does not require CX gates and is the most efficient to implement.
Furthermore, as the resulting Pauli terms for the full mixers given by $T_{Ham(1)}$ and $T_A$ consist only of $I$ and $X$ and therefore commute they can be implemented without Trotterization. For the mixers given by $\replaced{T_{\Delta,c}}{T_{\Delta}}, T_{\Delta}$, and $T_{rand}$ on the other hand, not all Pauli-strings commute which results in the need for Trotterization.
We continue with the case of constrained mixers.

\section{Constrained mixers}\label{sec:constrainedmixers}
We start by describing what is known as the
``XY"-mixer~\replaced{\cite{fuchs2021efficient,hadfield2019quantum,Wang2020}}{\cite{hadfield2019quantum,Wang2020,fuchs2021efficient}}, before we explore more general cases.
Our framework provides additional insights into this case and inspires further improvement of \replaced{the algorithms above}{above algorithms} with respect to the optimality of the mixers, described in Section~\ref{sec:optimality}, by (possibly) reducing the length of Pauli-strings.
For this case, we will analyze $T_A, T_{\Delta}$, and $T_{\Delta,c}$ only. $T_{Ham(1)}$ only makes sense when $n$ is a power of two, and $T_\text{rand}$ has in general high cost, see Table~\ref{tab:TotalHammingFull}.

\subsection{``One-hot" aka ``XY"-mixer}\label{sec:XYmixer}
We are concerned with the case given by all computational basis states with exactly one ``1", i.e., $B=\{\ket{x}, \ x_j\in\{0,1\}^n, \ s.t. \sum{x_j}=1\}$. These states are sometimes referred to as ``one-hot".
We have that $n=|B|$ is the number of qubits.
After some motivating examples we present the general case for constructing mixers for any $n>2$.
\noinitial{
Feasible initial states depend on $T$ and are given by Corollary~\ref{corollary:feasibleinitialstates}.
For example for $T=T_{\Delta,c}$ or $T=T_A$ a feasible initial state is given by the $W_n$-state
\begin{equation}
    \ket{\phi_0} = \frac{1}{\sqrt{n}}\left(\ket{000\dots001} + \ket{000\dots010} + \dots +\ket{100\dots000}\right).
\end{equation}
An efficient algorithm for this with logarithmic (in $n$) time complexity is presented in~\cite{cruz2019efficient}.
For $T=T_\Delta$ a feasible initial state is given by
\begin{equation}
    \ket{\phi_0}=\frac{1}{\sqrt{|v_n|}}\sin(c)\ket{000\dots001}
    +\sin(2c)\ket{000\dots010} + \dots + \sin(nc)\ket{100\dots000},
\end{equation}
where $c$ and $v_n$ are defined in Equation~\eqref{eq:eigvecT_delta}.
}

\subsubsection{Case \texorpdfstring{$n=2$}{}} The smallest, non-trivial case is given by $B=\{\ket{01},\ket{10}\}$. For any $b\in\R, |b|>0$ the transition matrix $T=\scriptsize\begin{pmatrix}
d& b \\
b& d 
\end{pmatrix}$ fulfills Theorem~\ref{theorem:MixerHamiltonian} and leads to the mixer
$H_M = \frac{b}{2}(XX+YY) + \frac{d}{2}( II - ZZ )$.
Since we want to minimize $\operatorname{Cost}(H_M)$ given in Equation~\eqref{eq:optimality}, we set $d=0$, which results in the  $\operatorname{Cost}(H_M)=4$.
However, by using Corollary~\ref{corollary:properties} there is room for further reducing the cost.
We can add the mixer for $C=\{\ket{00},\ket{11}\}$ since $B\cap C=\varnothing$.
Using the same $T$ (setting $d=0$) gives
\begin{equation}
    H_{M,B} + H_{M,C} = b XX,
\end{equation}
which has $\operatorname{Cost}(H_M)=2$. No Trotterization is needed in this case.

\begin{figure}
    \centering
\begin{tikzpicture}[align=center,node distance=2cm]
\node[circle, draw=black, thick, inner sep=2pt, minimum size=35pt] (1) at (0,0) {$IXX$};
\node[circle, draw=black, thick, inner sep=2pt, minimum size=35pt] (2) [right of=1] {$YYI$};
\node[circle, draw=black, thick, inner sep=2pt, minimum size=35pt] (3) [below right of=2] {$XXI$};
\node[circle, draw=black, thick, inner sep=2pt, minimum size=35pt] (4) [below left of=3] {$YIY$};
\node[circle, draw=black, thick, inner sep=2pt, minimum size=35pt] (5) [left of=4] {$XIX$};
\node[circle, draw=black, thick, inner sep=2pt, minimum size=35pt] (6) [above left of=5] {$IYY$};

\draw [thick,-,green] (2) -- (3);
\draw [thick,-,green] (4) -- (5);
\draw [thick,-,green] (1) -- (6);

\draw [thick,-,red] (2) -- (4) -- (6) -- (2);
\draw [thick,-,blue] (1) -- (3) -- (5) -- (1);

\node [single arrow, draw=black,fill=black!25] at (4.8,-1.4) {\phantom{aaa}};
\node [rotate=180,single arrow, draw=black,fill=black!25] at (-2.8,-1.4) {\phantom{aaa}};

\node[circle, draw=green, thick, inner sep=2pt, minimum size=35pt] (A) at (-6,-.5) {$XXI$\\$YYI$};
\node[circle, draw=green, thick, inner sep=2pt, minimum size=35pt] (B) at (-4.5,-1.5) {$XIX$\\$YIY$};
\node[circle, draw=green, thick, inner sep=2pt, minimum size=35pt] (C) at (-6,-2.5) {$IXX$\\$IYY$};

\node[circle, draw=blue, thick, inner sep=2pt, minimum size=35pt] (D) at (6.5,-.5) {$IXX$\\$XIX$\\$XXI$};
\node[circle, draw=red, thick, inner sep=2pt, minimum size=35pt] (E) at (6.5,-2.5) {$IYY$\\$YIY$\\$YYI$};
\end{tikzpicture}
    \caption{In the commutation graph (middle) of the terms of the mixer given in Equation~\eqref{eq:XYn3} an edge occurs if the terms commute.
    From this we can group terms into three (nodes connected by green edge) or two (nodes connected by red/blue edges) sets. Only the left/green grouping preserves the feasible subspace.}
    \label{fig:XYn3}
\end{figure}
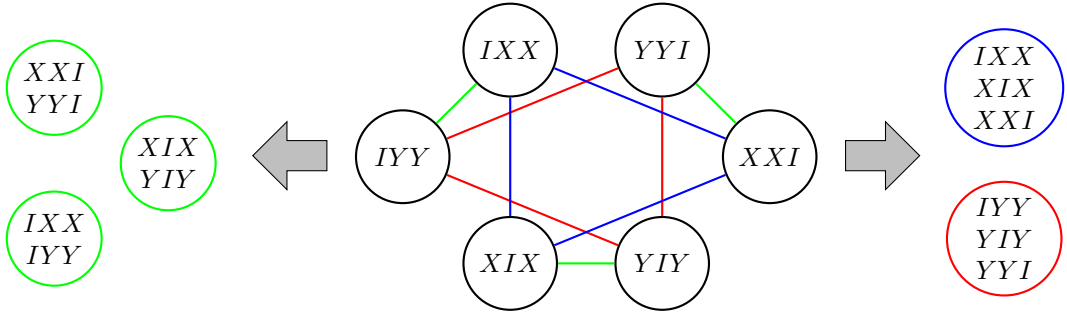
\subsubsection{Case \texorpdfstring{$n=3$}{}} We continue with $B=\{\ket{001}, \ket{010}, \ket{100}\}$. For the transition matrix $T= aT_{1\leftrightarrow 2}+bT_{2\leftrightarrow 3}+cT_{1\leftrightarrow 3}, a,b,c\in\R$
this results in the mixer
\begin{equation}
    H_{M,B} = \frac{a}{4} (XX+YY)(I+Z) + \frac{b}{4} (I+Z)(XX+YY) + \frac{c}{4} (X(I+Z)X+Y(I+Z)Y),
\end{equation}
with associated cost $\operatorname{Cost}(H_M)=24+12c$ for $a=b=1,c\in\{0,1\}$.
In this case Corollary~\ref{corollary:properties} allows us to add the mixer for $C=\{\ket{110}, \ket{101}, \ket{011}\}$ since $B\cap C=\varnothing$.
The mixer
\begin{equation}
   H_M = H_{M,B} + H_{M,C} = \frac{a}{2} (XXI+YYI) + \frac{b}{2} (XIX+YIY)+ \frac{c}{2} (IXX+IYY),
   \label{eq:XYn3}
\end{equation}
has cost  $\operatorname{Cost}(H_M)=8+4c$ for $a=b=1,c\in\{0,1\}$.
However, this mixer can not be realized, since not all terms of $H_M$ commute. Figure~\ref{fig:XYn3} shows two ways to put the graph into commuting Pauli-terms, with only one way to preserve the feasible subspace, as discussed in Section~\ref{sec:Trotterizations}.
\replaced{
For the Trotterization according to $T=T_{1\leftrightarrow 2}^{k_1}+T_{2\leftrightarrow 3}^{k_2}$ we have that $(T)_{3,1}=(T)_{1,3}=0, \ \forall k_1,k_2\in\N$. To fulfill Theorem~\ref{theorem:TrotterizedMixerHamiltonian} we need to include the term $T_{1\leftrightarrow 3}$ as well. The Trotterized mixer with minimal cost is therefore given by $T= T_{1\leftrightarrow 2}+T_{2\leftrightarrow 3}+T_{1\leftrightarrow 3}$.}{A Trotterized mixer with the lowest}

\subsubsection{The general case \texorpdfstring{$n>2$}{}}
We start with the observation that
for any symmetric $T\in \R^{n\times n}$ with zero diagonal we have
\begin{equation}
    H_{M,B} = \sum_{j=1}^{n} \sum_{k=j+1}^{n} (T)_{j,k} \widehat P^{j,k}, \quad
    (\widehat P^{j,k})_l = \frac{1}{2^n}\begin{cases}
    (X+Y) ,& \text{if } l\in\{j,k\}, \\
    (I+Z), &\text{if } x_l=z_l=0
    \end{cases}
\end{equation}
The cost for implementing on\added{e} of the entries, i.e., $e^{-i\beta \hat P^{j,k}}$ is given by the recursive formula
\begin{equation}
        \operatorname{Cost}(\widehat  P^{j,k}) = \sum_{l=2}^n 2 (l-1) f_n^l, \ n>2, \quad
        f_n^l = f_{n-1}^l + f_{n-1}^{l-1}, \quad
        f_2^l = \begin{cases}
            2, & \text{\hspace{-4pt}if } l=2,\\
            0, & \text{\hspace{-4pt}else},
        \end{cases}
\end{equation}
where $f_n^l$ is Pascal's triangle starting with 2 instead of 1.
Example\replaced{s of}{for} the resulting costs for different transition matrices can be seen in Table~\ref{tab:XY}.



\setlength{\tabcolsep}{4pt}
\begin{table}[t]
    \centering
        \begin{tabular}{l rrrrrrrrr}
        \toprule
        n&  3 & 4 & 5 & 6 & 7 & 8 & 9 & 10 & 15\\
        \midrule
        \multicolumn{2}{l}{$\mathbf{H_{M,B}}$}\\
        $T_{\Delta}$    &  12$\cdot$2& 32$\cdot$3& 80$\cdot$4\phantom{4}& 192$\cdot$5\phantom{4}& 448$\cdot$6\phantom{4}& 1024$\cdot$7\phantom{4}& 2304$\cdot$8\phantom{4}& 5120$\cdot$9\phantom{4} & 245760$\cdot$14\phantom{4}\\
        $T_{\Delta,c}$ & 12$\cdot$3& 32$\cdot$4& 80$\cdot$5\phantom{4}& 192$\cdot$6\phantom{4}& 448$\cdot$7\phantom{4}& 1024$\cdot$8\phantom{4}& 2304$\cdot$9\phantom{4}& 5120$\cdot$10 & 245760$\cdot$15\phantom{4}\\
        $T_A$           & 12$\cdot$3& 32$\cdot$6& 80$\cdot$10& 192$\cdot$15& 448$\cdot$21& 1024$\cdot$28& 2304$\cdot$36& 5120$\cdot$45 & 245760$\cdot$105\\
        \multicolumn{2}{l}{$\mathbf{H_{M,B}+\sum_i H_{M,C_i}}$}\\
        $T_{\Delta}$     & 4$\cdot$2& 4$\cdot$3& 4$\cdot$4\phantom{4}& 4$\cdot$5\phantom{4}& 4$\cdot$6\phantom{4}& 4$\cdot$7\phantom{4}& 4$\cdot$8\phantom{4}& 4$\cdot$9\phantom{4}& 4$\cdot$14\phantom{4}\\
        $T_{\Delta,c}$ & 4$\cdot$3& 4$\cdot$4& 4$\cdot$5\phantom{4}& 4$\cdot$6\phantom{4}& 4$\cdot$7\phantom{4}& 4$\cdot$8\phantom{4}& 4$\cdot$9\phantom{4}& 4$\cdot$10& 4$\cdot$15\phantom{4}\\
        $T_A$           & 4$\cdot$3& 4$\cdot$6& 4$\cdot$10& 4$\cdot$15& 4$\cdot$21& 4$\cdot$28& 4$\cdot$36& 4$\cdot$45& 4$\cdot$105\\
        \bottomrule
        \end{tabular}
    \caption{
    Comparison of the cost \#CX $ = \operatorname{Cost}(H_M)$ of mixers constrained to ``one-hot" states.
    The Trotterized versions we define $T_1=T_{O(1),c}$ and $T_2=T_{E(1)}$.
    All Hamiltonians need to be Trotterized.
    }
    \label{tab:XY}
\end{table}

The cost of the mixers can be considerably reduced by adding mixers generalized from case $n=3$.
If the entries $(T)_{i\leftrightarrow j}$ of $T$ are non-zero, we can add mixers for each of the $2^{n-2}$ pairs of states $x\in\{0,1\}^n$ that fulfill that $(x_i=0 \wedge x_j=1) \vee (x_i=1 \wedge x_j=0)$.
We can enumerate them with $0\leq l \leq 2^{n-2}-1$ by $\widetilde B_{i,j}^l=\{ \ket{x}, x\in\{0,1\}^n, \text{ s.t. } x_{-i,-j} = bin(l)\}$, where $x_{-i,-j}$ removes the indices i and j of x.
We have that $B\cap \widetilde B^l_{i,j}=\varnothing$.
%
We observe that for $n\geq 2$ let $\ket{x}, \ket{z}$ with $Ham(x,z)=2$, i.e., the strings $x,z$ differ at exactly two positions we have that
\begin{equation}
    \ket{x}\bra{z}+\ket{z}\bra{x} = \frac{1}{2^{n-1}}\begin{cases}
    (X+Y) ,& \text{if } x_l\neq z_l, \\
    (I+Z), &\text{if } x_l=z_l=0,\\
    (I-Z), &\text{if } x_l=z_l=1.
\end{cases}
\end{equation}
%
Adding these mixers for each nonzero entries $T_{j\leftrightarrow k}$ of $T$ has\deleted{t} the effect \replaced{of summing}{to sum} over all possible combinations of $(I\pm Z)^{\otimes{2^{n-2}}}$ which is equal to the identity. Therefore, we get the mixer
\begin{equation}\label{eq:XYmixer}
H_{M,B} + \sum_{i,j\in J}\sum_{l=0}^{2^{n-2}} H_{M,B^l_{i,j}} = \sum_{j=1}^{n} \sum_{k=j+1}^{n} (T)_{j,k} P^{j,k}, \quad P^{j,k}=X_iX_j+Y_iY_j,
\end{equation}
which reduces the cost of one term to $\operatorname{Cost}(P^{j,k})= 4$.




\subsubsection{Trotterizations}
Not all Pauli-strings of the mixer in Equation~\eqref{eq:XYmixer} commute. This necessitates a suitable and efficient Trotterization.
We will use Theorem~\ref{theorem:TrotterizedMixerHamiltonian} and Theorem~\ref{theorem:P_Tij_commutes} to identify valid Trotterized mixers.
As pointed out in~\cite{Wang2020} when \textit{n is a power of two} one can realize a Trotterization which is exact in the feasible subspace $B$.
Termed \textit{simultaneous complete-graph} mixer, this involves all possible pairs $(i,j)$ corresponding to a certain Trotterization of mixer for $T_A$. We will see that there are more efficient mixers that provide transitions between all pairs of feasible states.

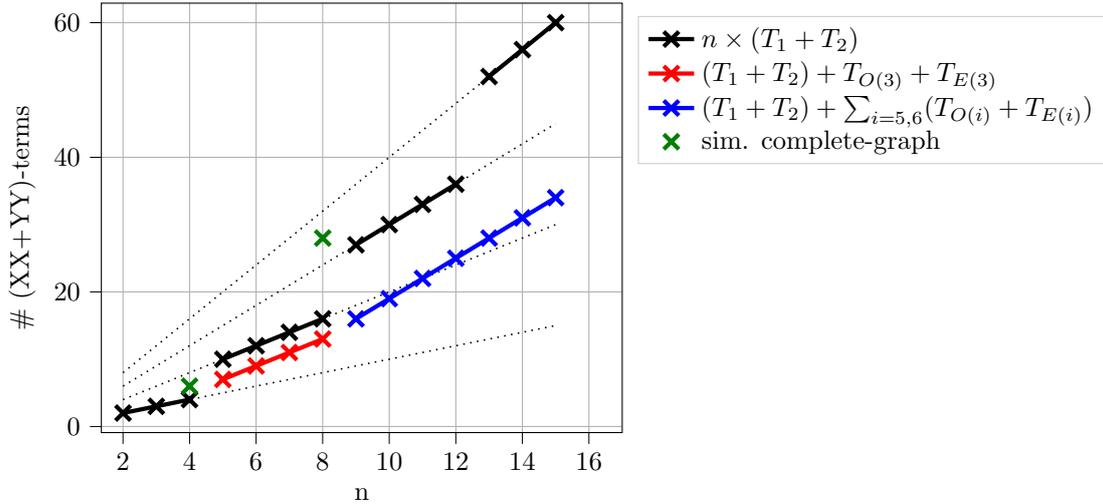
\begin{figure}
    \centering
\begin{tikzpicture}

\definecolor{darkgray176}{RGB}{176,176,176}
\definecolor{green01270}{RGB}{0,127,0}
\definecolor{lightgray204}{RGB}{204,204,204}

\begin{axis}[
legend cell align={left},
legend style={
  fill opacity=0.8,
  draw opacity=1,
  text opacity=1,
  at={(1.03,0.97)},
  anchor=north west,
  draw=lightgray204
},
tick align=outside,
tick pos=left,
x grid style={darkgray176},
xlabel={n},
xmajorgrids,
xmin=1.35, xmax=17,
xtick style={color=black},
y grid style={darkgray176},
ylabel={\# (XX+YY)-terms},
ymajorgrids,
ymin=-0.9, ymax=62.9,
ytick style={color=black}
]
\addplot [semithick, black, dotted, forget plot]
table {%
2 2
3 3
4 4
5 5
6 6
7 7
8 8
9 9
10 10
11 11
12 12
13 13
14 14
15 15
};
\addplot [ultra thick, black, mark=x, mark size=4, mark options={solid}, forget plot]
table {%
2 2
3 3
4 4
};
\addplot [semithick, black, dotted, forget plot]
table {%
2 4
3 6
4 8
5 10
6 12
7 14
8 16
9 18
10 20
11 22
12 24
13 26
14 28
15 30
};
\addplot [ultra thick, black, mark=x, mark size=4, mark options={solid}, forget plot]
table {%
5 10
6 12
7 14
8 16
};
\addplot [semithick, black, dotted, forget plot]
table {%
2 6
3 9
4 12
5 15
6 18
7 21
8 24
9 27
10 30
11 33
12 36
13 39
14 42
15 45
};
\addplot [ultra thick, black, mark=x, mark size=4, mark options={solid}, forget plot]
table {%
9 27
10 30
11 33
12 36
};
\addplot [semithick, black, dotted, forget plot]
table {%
2 8
3 12
4 16
5 20
6 24
7 28
8 32
9 36
10 40
11 44
12 48
13 52
14 56
15 60
};
\addplot [ultra thick, black, mark=x, mark size=4, mark options={solid}]
table {%
13 52
14 56
15 60
};
\addlegendentry{$n\times(T_1+T_2)$}
\addplot [ultra thick, red, mark=x, mark size=4, mark options={solid}]
table {%
5 7
6 9
7 11
8 13
};
\addlegendentry{$(T_1+T_2)+T_{O(3)}+T_{E(3)}$}
\addplot [ultra thick, blue, mark=x, mark size=4, mark options={solid}]
table {%
9 16
10 19
11 22
12 25
13 28
14 31
15 34
};
\addlegendentry{$(T_1+T_2)+\sum_{i=5,6}(T_{O(i)}+T_{E(i)})$}
\addplot [ultra thick, green01270, mark=x, mark size=4, mark options={solid}, only marks]
table {%
4 6
8 28
};
\addlegendentry{sim. complete-graph}
\end{axis}

\end{tikzpicture}
    \caption{Comparison of different Trotterization mixers restricted to ``one-hot" states.
    All markers represent cases when the resulting mixer provides transition\added{s} for all pairs of feasible states, see also Figure~\ref{fig:TOE_trotter}.
    All versions can be implemented in linear depth.
    The most efficient Trotterizations are achieved by using sub-diagonal entries.
    The cost equals 4 times \# (XX+YY)-terms.
    }
    \label{fig:XY-trotter}
\end{figure}

Another possibility is to Trotterize $T_{\Delta,c}$ or $T_\Delta$ according to odd and even entries as described in Section~\ref{sec:Toddeven}. This is what is termed \added{a} \textit{parity-partitioned} mixer in~\cite{Wang2020}. However, fewer and fewer feasible states can be reached as $n$ increases, as we have seen in Figure~\ref{fig:TOE_trotter}.
Repeated applications ($r>0$ in Equation~\eqref{eq:mixer_transition}) are necessary and $r$ increases with increasing $n$.
Figure~\ref{fig:XY-trotter} shows a comparison of different Trotterizations.
As the cost of the mixer is dictated by the number of non-zero entries of the transition matrix, it is more efficient to \textit{add} mixers for off-diagonals according to
$\sum_{i\in I} (T_{O(i)}\!  +\!  T_{E(i)})$ for some suitable index set $I$.

\subsection{General cases}\label{sec:generalcases}
In this section we analyze some specific cases that go beyond unrestricted mixers and mixers restricted to one-hot states.

\setlength{\tabcolsep}{3pt}
\begin{table}[t]
    \centering
    
    \begin{tabular}{lrrr}
        \toprule
        $C=$ & $T_{1 \leftrightarrow 2}$ & $T_{{2 \leftrightarrow 3}}$ & $T_{{3 \leftrightarrow 1}}$ \\
        \midrule
        $\{\}$ & 12 & 8 & 16\\
        $\{\ket{000},\ket{001}\}$ & 20 & \cellcolor{black!25}2 & 24\\
        $\{\ket{000},\ket{101}\}$ & 24 & 20 & 28\\
        $\{\ket{000},\ket{110}\}$ & \cellcolor{black!25}6 & 20 & 28\\
        $\{\ket{000},\ket{111}\}$ & 28 & 24 & \cellcolor{black!25}8\\
        $\{\ket{001},\ket{101}\}$ & 20 & 16 & 24\\
        $\{\ket{001},\ket{110}\}$ & 28 & 24 & \cellcolor{black!25}8\\
        $\{\ket{001},\ket{111}\}$ & \cellcolor{black!25}6 & 20 & 28\\
        $\{\ket{101},\ket{110}\}$ & 24 & 20 & 28\\
        $\{\ket{101},\ket{111}\}$ & 20 & 16 & 24\\
        $\{\ket{110},\ket{111}\}$ & 20 & \cellcolor{black!25}2 & 24\\
        \bottomrule
    \end{tabular}
    \caption{
    Comparison of the $\operatorname{Cost}(H_{M,B}+H_{M,C})$ for different added mixers $C$ for the case $B=\{\ket{100}, \ket{010}, \ket{011}\}$.
    All 10 possible pairs are shown.
    We see that the cost can be both reduced and increased.
    }
    \label{tab:case1_addmixer}
    \end{table}

\subsubsection{Example 1}
We start by looking at the case $B=\{\ket{100}, \ket{010}, \ket{011}\}$.
Using $T_\Delta=c_{1,2} T_{1\leftrightarrow2}+c_{2,3} T_{2\leftrightarrow3}$ and $T_{\Delta,c}=T_\Delta+c_{3,1} T_{3\leftrightarrow1}$, this results in the mixer
\begin{equation}
\begin{split}
    H_{M,B}=& \phantom{ + } c_{1,2}\frac{1}{4}(XX+YY)(I+Z)\\
    &+ c_{2,3}\frac{1}{4}(I+Z)(I-Z)X\\
    & + c_{3,1}\frac{1}{4}(XXX + YXY + YYX - XYY)
\end{split}
\end{equation}
with $\operatorname{Cost}(H_M) = 12c_{1,2}+8c_{2,3}+16c_{3,1}$.
Here, $(c_{1,2},c_{2,3},c_{3,1})=(1,1,0)$ corresponds to $T_\Delta$
and $(1,1,1)$ to $T_{\Delta,c}$.
There is a lot of freedom adding mixers, which is summarized in 
Table~\ref{tab:case1_addmixer}.
Adding more terms only increases the cost for this case.
%
Overall, the most efficient mixers for $B$ \added{is} given by
\begin{equation}
    H_{M} = \frac{c_{1,2}}{2}
    \begin{bmatrix*}[l]
    XXI + XXZ \text{ or} \\
    XXI + YYZ
    \end{bmatrix*}
    + \frac{c_{2,3}}{2}
    \begin{bmatrix*}[l]
    (I+Z)IX \text{ or} \\
    I(I-Z)X
    \end{bmatrix*}
    + \frac{c_{3,1}}{2}
    \begin{bmatrix*}[l]
    XXX - XYY \text{ or} \\
     XXX + YXY
    \end{bmatrix*},
\end{equation}
with associated cost 
$\operatorname{Cost}(H_M) = 6c_{1,2}+2c_{2,3}+8c_{3,1}$.
A valid Trotterization is given through splitting according to $T_{i\leftrightarrow j}$.





\subsubsection{Example 2}

\setlength{\tabcolsep}{0pt}
   \begin{table}[t]
    \centering
    \begin{tabular}{lrrrrrrrrrrrrrrr}
        \toprule
$C=$ & $T_{1 \leftrightarrow 2}$ & $T_{1 \leftrightarrow 3}$ & $T_{{1 \leftrightarrow 4}}$ & $T_{{1 \leftrightarrow 5}}$& $T_{{1 \leftrightarrow 6}}$&
       $T_{{2 \leftrightarrow 3}}$ & $T_{{2 \leftrightarrow 4}}$ & $T_{{2 \leftrightarrow 5}}$& $T_{{2 \leftrightarrow 6}}$&
       $T_{{3 \leftrightarrow 4}}$ & $T_{{3 \leftrightarrow 5}}$& $T_{{3 \leftrightarrow 6}}$&
       $T_{{4 \leftrightarrow 5}}$ & $T_{{4 \leftrightarrow 6}}$&
       $T_{{5 \leftrightarrow 6}}$
       \\
\midrule
$\scriptstyle\{\}$ & 96 & 64 & 112 & 80 & 80 & 112 & 96 & 64 & 64 & 96 & 96 & 96 & 112 & 112 & 80\\
$\scriptstyle\{\ket{00010},\ket{00011}\}$ & 160 & \cellcolor{black!25}24 & 176 & 144 & 144 & 176 & 160 & 128 & 128 & 160 & 160 & 160 & 176 & 176 & 144\\
$\scriptstyle\{\ket{00010},\ket{01101}\}$ & 208 & 176 & \cellcolor{black!25}48 & 192 & 192 & 224 & 208 & 176 & 176 & 208 & 208 & 208 & 224 & 224 & 192\\
$\scriptstyle\{\ket{00010},\ket{10001}\}$ & 192 & 160 & 208 & 176 & 176 & 208 & \cellcolor{black!25}48 & 160 & 160 & 192 & 192 & 192 & 208 & 208 & 176\\
$\scriptstyle\{\ket{00010},\ket{10101}\}$ & 208 & 176 & 224 & 192 & 192 & 224 & 208 & 176 & 176 & 208 & 208 & 208 & 224 & \cellcolor{black!25}48 & 192\\
$\scriptstyle\{\ket{00010},\ket{11001}\}$ & 208 & 176 & 224 & 192 & 192 & 224 & 208 & 176 & 176 & 208 & 208 & 208 & \cellcolor{black!25}48 & 224 & 192\\
$\scriptstyle\{\ket{00011},\ket{01101}\}$ & 192 & 160 & 208 & 176 & 176 & 208 & 192 & 160 & 160 & \cellcolor{black!25}40 & 192 & 192 & 208 & 208 & 176\\
$\scriptstyle\{\ket{00011},\ket{10000}\}$ & 192 & 160 & 208 & 176 & 176 & 208 & \cellcolor{black!25}48 & 160 & 160 & 192 & 192 & 192 & 208 & 208 & 176\\
$\scriptstyle\{\ket{00100},\ket{01000}\}$ & 176 & 144 & 192 & 160 & 160 & 192 & 176 & 144 & 144 & 176 & 176 & 176 & 192 & 192 & \cellcolor{black!25}32\\
$\scriptstyle\{\ket{00100},\ket{01100}\}$ & 160 & 128 & 176 & 144 & 144 & 176 & 160 & \cellcolor{black!25}24 & 128 & 160 & 160 & 160 & 176 & 176 & 144\\
$\scriptstyle\{\ket{00100},\ket{10000}\}$ & 176 & 144 & 192 & \cellcolor{black!25}32 & 160 & 192 & 176 & 144 & 144 & 176 & 176 & 176 & 192 & 192 & 160\\
$\scriptstyle\{\ket{00100},\ket{10001}\}$ & 192 & 160 & 208 & 176 & 176 & 208 & 192 & 160 & 160 & 192 & \cellcolor{black!25}40 & 192 & 208 & 208 & 176\\
$\scriptstyle\{\ket{00100},\ket{10111}\}$ & 192 & 160 & 208 & 176 & 176 & 208 & \cellcolor{black!25}48 & 160 & 160 & 192 & 192 & 192 & 208 & 208 & 176\\
$\scriptstyle\{\ket{00101},\ket{10110}\}$ & 192 & 160 & 208 & 176 & 176 & 208 & \cellcolor{black!25}48 & 160 & 160 & 192 & 192 & 192 & 208 & 208 & 176\\
$\scriptstyle\{\ket{00111},\ket{01011}\}$ & 176 & 144 & 192 & 160 & 160 & 192 & 176 & 144 & 144 & 176 & 176 & 176 & 192 & 192 & \cellcolor{black!25}32\\
$\scriptstyle\{\ket{00111},\ket{01111}\}$ & 160 & 128 & 176 & 144 & 144 & 176 & 160 & \cellcolor{black!25}24 & 128 & 160 & 160 & 160 & 176 & 176 & 144\\
$\scriptstyle\{\ket{00111},\ket{10100}\}$ & 192 & 160 & 208 & 176 & 176 & 208 & \cellcolor{black!25}48 & 160 & 160 & 192 & 192 & 192 & 208 & 208 & 176\\
$\scriptstyle\{\ket{01000},\ket{01100}\}$ & 160 & 128 & 176 & 144 & 144 & 176 & 160 & 128 & \cellcolor{black!25}24 & 160 & 160 & 160 & 176 & 176 & 144\\
$\scriptstyle\{\ket{01000},\ket{10000}\}$ & 176 & 144 & 192 & 160 & \cellcolor{black!25}32 & 192 & 176 & 144 & 144 & 176 & 176 & 176 & 192 & 192 & 160\\
$\scriptstyle\{\ket{01000},\ket{10001}\}$ & 192 & 160 & 208 & 176 & 176 & 208 & 192 & 160 & 160 & 192 & 192 & \cellcolor{black!25}40 & 208 & 208 & 176\\
$\scriptstyle\{\ket{01000},\ket{11011}\}$ & 192 & 160 & 208 & 176 & 176 & 208 & \cellcolor{black!25}48 & 160 & 160 & 192 & 192 & 192 & 208 & 208 & 176\\
$\scriptstyle\{\ket{01001},\ket{11010}\}$ & 192 & 160 & 208 & 176 & 176 & 208 & \cellcolor{black!25}48 & 160 & 160 & 192 & 192 & 192 & 208 & 208 & 176\\
$\scriptstyle\{\ket{01011},\ket{01111}\}$ & 160 & 128 & 176 & 144 & 144 & 176 & 160 & 128 & \cellcolor{black!25}24 & 160 & 160 & 160 & 176 & 176 & 144\\
$\scriptstyle\{\ket{01011},\ket{11000}\}$ & 192 & 160 & 208 & 176 & 176 & 208 & \cellcolor{black!25}48 & 160 & 160 & 192 & 192 & 192 & 208 & 208 & 176\\
$\scriptstyle\{\ket{01100},\ket{10000}\}$ & \cellcolor{black!25}40 & 160 & 208 & 176 & 176 & 208 & 192 & 160 & 160 & 192 & 192 & 192 & 208 & 208 & 176\\
$\scriptstyle\{\ket{01100},\ket{10001}\}$ & 208 & 176 & 224 & 192 & 192 & \cellcolor{black!25}48 & 208 & 176 & 176 & 208 & 208 & 208 & 224 & 224 & 192\\
$\scriptstyle\{\ket{01100},\ket{11111}\}$ & 192 & 160 & 208 & 176 & 176 & 208 & \cellcolor{black!25}48 & 160 & 160 & 192 & 192 & 192 & 208 & 208 & 176\\
$\scriptstyle\{\ket{01101},\ket{11110}\}$ & 192 & 160 & 208 & 176 & 176 & 208 & \cellcolor{black!25}48 & 160 & 160 & 192 & 192 & 192 & 208 & 208 & 176\\
$\scriptstyle\{\ket{01111},\ket{11100}\}$ & 192 & 160 & 208 & 176 & 176 & 208 & \cellcolor{black!25}48 & 160 & 160 & 192 & 192 & 192 & 208 & 208 & 176\\
$\scriptstyle\{\ket{10000},\ket{10001}\}$ & 160 & \cellcolor{black!25}24 & 176 & 144 & 144 & 176 & 160 & 128 & 128 & 160 & 160 & 160 & 176 & 176 & 144\\
$\scriptstyle\{\ket{10110},\ket{10111}\}$ & 160 & \cellcolor{black!25}24 & 176 & 144 & 144 & 176 & 160 & 128 & 128 & 160 & 160 & 160 & 176 & 176 & 144\\
$\scriptstyle\{\ket{10110},\ket{11010}\}$ & 176 & 144 & 192 & 160 & 160 & 192 & 176 & 144 & 144 & 176 & 176 & 176 & 192 & 192 & \cellcolor{black!25}32\\
$\scriptstyle\{\ket{10110},\ket{11110}\}$ & 160 & 128 & 176 & 144 & 144 & 176 & 160 & \cellcolor{black!25}24 & 128 & 160 & 160 & 160 & 176 & 176 & 144\\
$\scriptstyle\{\ket{11010},\ket{11011}\}$ & 160 & \cellcolor{black!25}24 & 176 & 144 & 144 & 176 & 160 & 128 & 128 & 160 & 160 & 160 & 176 & 176 & 144\\
$\scriptstyle\{\ket{11010},\ket{11110}\}$ & 160 & 128 & 176 & 144 & 144 & 176 & 160 & 128 & \cellcolor{black!25}24 & 160 & 160 & 160 & 176 & 176 & 144\\

        $\phantom{\{\ket{0000}\}}\cdots$\\
$\scriptstyle\{\ket{00000},\ket{11111}\}$ & 224 & 192 & 240 & 208 & 208 & 240 & 224 & 192 & 192 & 224 & 224 & 224 & 240 & 240 & 208\\
        $\phantom{\{\ket{0000}\}}\cdots$\\
        \bottomrule
    \end{tabular}
    \caption{
    Comparison of the $\operatorname{Cost}(H_{M,B}+H_{M,C})$ for different added mixers $C$
    for the case $B=\{\ket{10010},\allowbreak \ket{01110}, \allowdisplaybreaks \ket{10011},\allowbreak \ket{11101},\allowbreak \ket{00110},\allowbreak \ket{01010}\}$.
    There are 325 possible pairs in total. We see that the cost can be both reduced and increased.
    }
    \label{tab:case2_addmixer}
    \end{table}

Finally, we investigate the case $B=\{\ket{10010},\ket{01110},\ket{10011},\ket{11101},\ket{00110},\ket{01010}\}$, which restricts to 6 of the total $2^5=32$ computational basis states for 5 qubits.
It is not \replaced{clear a priori}{a priorily clear} if for any (distinct) pair $T_{i_1,j_1}$ and $T_{i_2,j_2}$ all pairs of non-vanishing Pauli strings commute. In order to fulfill Equation~\eqref{eq:mixer_transition} for $r=1$, this means that one needs to Trotterize according to all pairs of $T_A$ as shown in Table~\ref{tab:case2_addmixer}.
The resulting cost for this Trotterized mixer is $\operatorname{Cost}(H_M)=1360$.
Since $\mathcal{H}\cap B$ is spanned by $k=2^n-|B|=26$ computational basis states,
there are
{\scriptsize$\begin{pmatrix}
k \\
2
\end{pmatrix}=325$} different pairs to add to each $T_{i\leftrightarrow j}$.
As Table~\ref{tab:case2_addmixer} shows this can reduce the cost of the resulting mixer to 
$\operatorname{Cost}(H_M)=568$.
Of course, there is the possibility to reduce the cost even further by add\added{ing} more mixers for states in the kernel of $H_{M,B}$.
However, this quickly becomes computationally very demanding, when all possibilities are \replaced{considered in a brute-force fashion}{brute-forced}.

\section{Availability of Data and Code}
All data and the python/jupyter notebook source code for reproducing the results obtained in this article are available at  \url{https://github.com/OpenQuantumComputing}.

\section{Conclusion and Outlook}
While designing mixers with the presented framework is more or less straight forward, designing \textit{efficient} mixers turns out to be a difficult task\deleted{s}.
An additional difficulty arises due to the need for Trotterization.
Somewhat counter-intuitively, the more restricted the mixer, i.e., the smaller the subspace, the more design freedom one has to increase efficiency.
More structure/symmetry of the restricted subspace seems to allow for a lower cost of the resulting mixer. 
For the case of ``one-hot" states, we provide a deeper understanding of the requirements for Trotterizations. Compared to state of the art in \added{the} literature, this leads to a considerable reduction of the cost of the mixer, as defined in Equation~\eqref{eq:optimality}.
The introduced framework reveals a rigorous mathematical analysis of the underlying structure of mixer Hamiltonians and deepens the understanding of those.
We believe the framework can serve as the backbone for further development of efficient mixers.

When adding mixers, in general the kernel of $H_{M,B}$ is spanned by
$k=2^n-|B|$ computational basis states. Therefore, one can add 
\begin{equation}
    \sum_{i=2}^k \begin{pmatrix}
k \\
2
\end{pmatrix}
\end{equation}
different mixers for each non-zero entry $T_{i\leftrightarrow j}$ of $T$.
Out of all these, one wants to find the combination leading to the lowest overall cost.
Clearly, brute-forc\replaced{e}{ing the} optimization is computationally not tractable, even for \added{a} moderate \deleted{sized} number of qubits $n$ when $|B|\ll2^n$.
Further research should aim to carefully analyze \deleted{of} the structure of the basis states in $B$ in order to develop efficient (heuristic) algorithms to find low-cost mixers through adding mixers in the kernel of $H_{M,B}$.


\printbibliography
\end{document}